\DeclareMathOperator{\dist}{dist}
\DeclareMathOperator{\stv}{\textsc{STV}}
\DeclareMathOperator{\topp}{\textsf{top}}
\DeclareMathOperator{\distortion}{\textsf{distortion}}
\DeclareMathOperator{\coordination}{\textsc{Coordination}}
\DeclareMathOperator{\pluralitymatching}{\textsc{PluralityMatching}}
\DeclareMathOperator{\socialcost}{\textsc{SC}}
\theoremstyle{plain}
\newtheorem{theorem}{Theorem}[section]
\newtheorem{lemma}[theorem]{Lemma}
\newtheorem{corollary}[theorem]{Corollary}
\newtheorem{proposition}[theorem]{Proposition}
\newtheorem{conjecture}[theorem]{Conjecture}
\newtheorem{observation}[theorem]{Observation}
\newtheorem{claim}[theorem]{Claim}
\newtheorem{assumption}[theorem]{Assumption}
\newtheorem{question}{Question}
\theoremstyle{definition}
\newtheorem{definition}[theorem]{Definition}
\theoremstyle{remark}
\newtheorem{remark}[theorem]{Remark}
\title[Dimensionality, Coordination, and Robustness]{Dimensionality, Coordination, and Robustness in Voting}
\address{Department of Computer Science, National Technical University of Athens}
\email{ioannis.anagnostides@gmail.com}
\email{fotakis@cs.ntua.gr}
\email{patsilinak@corelab.ntua.gr}
\author[I. Anagnostides, D. Fotakis, and P. Patsilinakos]{Ioannis Anagnostides, Dimitris Fotakis, and Panagiotis Patsilinakos}
\begin{document}

\begin{abstract}
We study the performance of voting mechanisms from a utilitarian standpoint, under the recently introduced framework of metric-distortion, offering new insights along three main lines. First, if $d$ represents the doubling dimension of the metric space, we show that the distortion of $\stv$ is $O(d \log \log m)$, where $m$ represents the number of candidates. For doubling metrics this implies an exponential improvement over the lower bound for general metrics, and as a special case it effectively answers a question left open by Skowron and Elkind (AAAI `17) regarding the distortion of $\stv$ under low-dimensional Euclidean spaces. More broadly, this constitutes the first nexus between the performance of any voting rule and the ``intrinsic dimensionality'' of the underlying metric space. We also establish a nearly-matching lower bound, refining the construction of Skowron and Elkind. Moreover, motivated by the efficiency of $\stv$, we investigate whether natural learning rules can lead to low-distortion outcomes. Specifically, we introduce simple, deterministic and decentralized exploration/exploitation dynamics, and we show that they converge to a candidate with $O(1)$ distortion. Finally, driven by applications in facility location games, we consider several refinements and extensions of the standard metric setting. Namely, we prove that the deterministic mechanism recently introduced by Gkatzelis, Halpern, and Shah (FOCS `20) attains the optimal distortion bound of $2$ under ultra-metrics, while it also comes close to our lower bound under distances satisfying approximate triangle inequalities.  
\end{abstract}

\maketitle
\clearpage
\tableofcontents

\section{Introduction}

Aggregating the preferences of individual entities into a collective decision lies at the foundations of voting theory, and has recently found a myriad of applications in areas such as information retrieval, recommender systems, and machine learning \cite{DBLP:journals/jmlr/LuB14}. A common hypothesis in the literature of social choice asserts that agents only provide an \emph{order} of preferences over a (finite) set of alternatives, without indicating a precise \emph{measure} of each preference. However, this assertion might seem misaligned with many classical models in economic theory \cite{vonNeumann1944-VONTOG-4} which espouse a \emph{utilitarian} framework to represent agents' preferences. This raises the following concern: What is the loss in utilitarian efficiency of a mechanism eliciting only ordinal information? 

This question was raised by Procaccia and Rosenschein \cite{DBLP:conf/cia/ProcacciaR06}, introducing the concept of \emph{distortion}, and has since led to a substantial body of work. In this paper we mostly focus on the refined notion of \emph{metric distortion} \cite{DBLP:conf/aaai/AnshelevichBP15}, wherein agents and candidates are associated with points in some \emph{metric space}, and preferences are being determined based on the proximity in the underlying metric (see \Cref{section:preliminaries} for a formal definition). Importantly, this framework offers a quantitative ``benchmark'' for comparing different voting rules commonly employed in practice. Indeed, one of the primary considerations of our work lies in characterizing the performance of the \emph{single transferable vote}\footnote{For consistency with prior work $\stv$ will represent throughout this paper the \emph{single-winner} variant of the system, which is sometimes referred to as \emph{instant-runoff voting} (\textsc{IRV}) in the literature.} mechanism (henceforth $\stv$). 

$\stv$ is a widely-popular iterative voting system employed in the national elections of several countries, including Australia, Ireland, and India, as well as in many other preference aggregation tasks; e.g., in the Academy Awards. To be more precise, $\stv$ proceeds in an iterative fashion: In each round, agents vote for their most preferred candidate---among the \emph{active} ones, while the candidate who enjoyed the least amount of support in the current round gets eliminated. This process is repeated for $m-1$ rounds, where $m$ represents the number of (initial) alternatives, and the last surviving candidate is declared the winner of $\stv$. As an aside, notice that this process is generally non-deterministic due to the need for a tie-breaking mechanism; as in \cite{DBLP:conf/aaai/SkowronE17}, we will work with the \emph{parallel universe model} of Conitzer et al. \cite{10.5555/1661445.1661464}, wherein a candidate is said to be an $\stv$ winner if it survives under \emph{some} sequence of eliminations. 

In this context, Skowron and Elkind \cite{DBLP:conf/aaai/SkowronE17} were the first to analyze the distortion of $\stv$ under metric preferences. Specifically, they showed that the distortion of $\stv$ in general metric spaces is always $O(\log m)$, while they also gave a nearly-matching lower bound in the form of $\Omega(\sqrt{\log m})$. Interestingly, a careful examination of their lower bound reveals the existence of a high-dimensional submetric, as depicted in \Cref{fig:high_dimensionall}, and it is a well-known fact in the theory of metric embeddings that such objects cannot be \emph{isometrically} embedded into low-dimensional\footnote{We say that a Euclidean space is low-dimensional if its dimension $d$ is bounded by a ``small'' universal constant, i.e. $d = O(1)$.} Euclidean spaces \cite{Matousek2002}. As a result, Skowron and Elkind \cite{DBLP:conf/aaai/SkowronE17} left open the following intriguing question:

\begin{question}
    \label{question:STV}
What is the distortion of $\stv$ under low-dimensional Euclidean spaces?
\end{question}

\begin{figure}[!ht]
    \centering
    \includegraphics[scale=0.5]{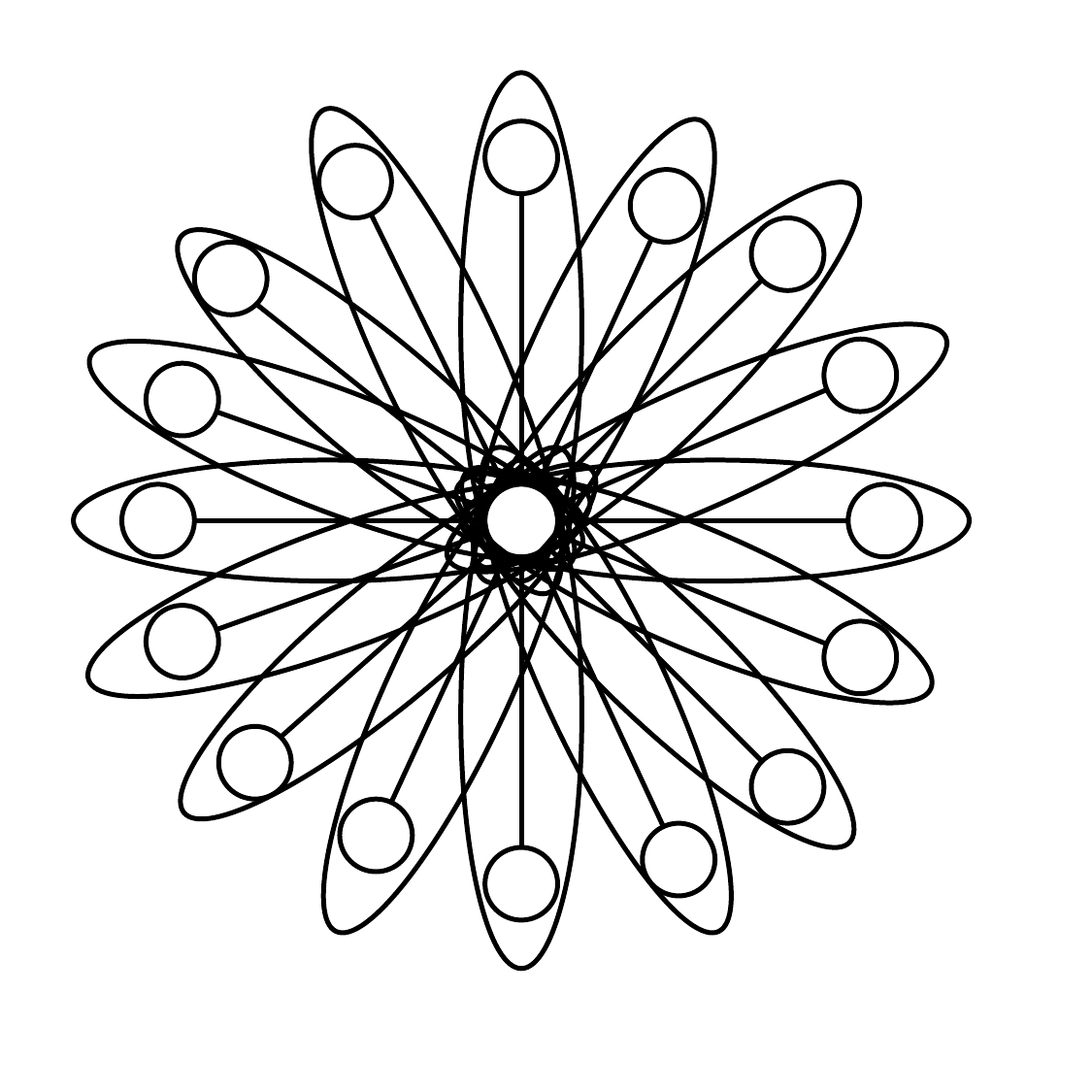}
    \caption{A high-dimensional metric in the form of a ``star'' graph.}
    \label{fig:high_dimensionall}
\end{figure}

Needless to say that the performance of voting rules in low-dimensional spaces has been a subject of intense scrutiny in spatial voting theory, under the premise that voters and candidates are typically embedded in subspaces with small dimension \cite{arrow_1990,RePEc:cup:cbooks:9780521275156}. For example, recent experimental work by Elkind et al. \cite{DBLP:conf/aaai/ElkindFLSST17} evaluates several voting rules in a $2$-dimensional Euclidean space, motivated by the fact that preferences are typically crystallized on the basis of a few crucial dimensions; e.g., economic policy and healthcare. Indeed, in the so-called Nolan Chart---a celebrated political spectrum diagram---political views are charted along two axes, expanding upon the traditional one-dimensional representation; to quote from the work of Elkind et al. \cite{DBLP:conf/aaai/ElkindFLSST17}:

\begin{quote}
    \textit{``...the popularity of the Nolan
Chart [...] indicates that two dimensions
are often sufficient to provide a good approximation of voters’ preferences.''}
\end{quote}

Thus, it is natural to ask whether we can refine the analysis of $\stv$ under low-dimensional spaces. In fact, as part of a broader agenda analogous questions can be raised for other mechanisms as well. However, it is interesting to point out that for many voting rules analyzed within the framework of distortion there exist low-dimensional lower bounds; some notable examples are given in \Cref{tab:dimensions}. In contrast, our work will separate $\stv$ from the mechanisms in \Cref{tab:dimensions}, effectively addressing \Cref{question:STV}. Importantly, we shall provide a characterization well-beyond Euclidean spaces, to metrics with ``intrinsically'' low dimension. 

\begin{table}[!ht]
    \centering
    \begin{tabular}{||c c c||} 
 \hline
 Mechanism & Lower Bound & Dimension \\ [0.5ex] 
 \hline\hline
 Plurality & $2m-1$ & $1$ \\ 
 \hline
 Borda & $2m-1$ & $1$ \\
 \hline
 Copeland & $5$ & $2$ \\
 \hline
 Veto & $2n-1$ & $1$ \\
 \hline
 Approval & $2n-1$ & $1$ \\
 \hline
\end{tabular}
\caption{The Euclidean dimension required to construct a (tight) lower bound for several common voting rules; these results appear in \cite{DBLP:conf/aaai/AnshelevichBP15}. We should note that for Copeland the metric constructed in \cite{DBLP:conf/aaai/AnshelevichBP15} is not Euclidean, but can be easily modified to be one.}
    \label{tab:dimensions}
\end{table}

The next consideration of our work is directly motivated by the efficiency of $\stv$ compared to the plurality rule, and in particular the strategic implications of this discrepancy. A good starting point for this discussion stems from the fact that in many fundamental preference aggregation settings alternatives are chosen by inefficient mechanisms, and in many cases any reform faces insurmountable impediments. For example, in political elections the voting mechanism is typically dictated by electoral laws, or even the constitution \cite{doi:10.1177/0951692892004002005}. As a result, understanding the behavior of strategic agents when faced with inefficient mechanisms is of paramount importance \cite{DBLP:conf/aaai/BrillC15,DBLP:conf/wine/ZuckermanFCR11}. A rather orthogonal way of viewing this is whether autonomous agents can converge to admissible social choices through natural learning rules; this begs the question:

\begin{question}
    \label{question:dynamics}
    To what extent can strategic behavior improve efficiency in voting?
\end{question}

We stress that although in the absence of any information it might be unclear how agents can engage in strategic behavior, in most applications of interest agents have plenty of prior information before they cast their votes, e.g. through polls, surveys, forecasts, prior elections, or even early voting. Indeed, there is a prolific line of work which studies population dynamics for agents that cast their votes in response to the information they possess (see \cite{10.2307/40270930}, and references therein), as well as the role of information in shaping public policy \cite{RePEc:aub:autbar:579.03}.

To address such considerations we propose a natural model wherein agents act iteratively based on some partial feedback on the other voters' preferences. We explain how $\stv$ can be very naturally cast in this framework, while we establish the existence of simple and decentralized \emph{coordination} dynamics converging to a near-optimal alternative. 

The final theme of our work offers certain refinements and extensions of prior works, mostly driven by some fundamental applications in the context of facility location games. Specifically, we primarily focus on the optimal---under metric preferences---deterministic mechanism recently introduced by Gkatzelis, Halpern, and Shah \cite{DBLP:conf/focs/Gkatzelis0020}; we show that it recovers the optimal bound under ultra-metrics, and near-optimal distortion under distances satisfying approximate triangle inequalities.

\subsection{Overview of Results} Our first contribution is to relate the distortion of $\stv$ to the dimensionality of the underlying metric space. Specifically, our first insight is to employ the following fundamental concept from metric geometry:\footnote{To keep the exposition reasonably smooth, the formal definition of standard notation is deferred to the preliminaries in \Cref{section:preliminaries}.}

\begin{definition}[Doubling Dimension]
    \label{definition:doubling_dimension}
The \emph{doubling constant} of a metric space $(\mathcal{M}, \dist)$ is the least integer $\lambda \geq 1$ such that for all $x \in \mathcal{M}$ and for all $r > 0$, every ball $\mathcal{B}(x, 2r)$ can be covered by the union of at most $\lambda$ balls of the form $\mathcal{B}(s, r)$, where $s \in \mathcal{M}$; that is, there exists a subset $\mathcal{S} \subseteq \mathcal{M}$ with $|\mathcal{S}| \leq \lambda$ such that 

\begin{equation}
    \mathcal{B}(x, 2r) \subseteq \bigcup_{s \in \mathcal{S}} \mathcal{B}(s, r).
\end{equation}

The \emph{doubling dimension} is then defined as $\dim(\mathcal{M}) :=  \log_2 \lambda$.\footnote{To avoid trivialities it will be assumed that $\lambda \geq 2$.}
\end{definition}

This concept generalizes the standard notion of dimension since $\dim(\mathbb{R}^d) = \Theta(d)$ when $\mathbb{R}^d$ is endowed with the $\ell_p$ norm. Moreover, it is clear that for a finite metric space $(\mathcal{M}, \dist), \dim(\mathcal{M}) \leq \log_2 |\mathcal{M}|$; for example, this is essentially tight for the high-dimensional metric of \Cref{fig:high_dimensionall}. The concept of doubling dimension was introduced by Larman \cite{10.1112/plms/s3-17.1.178} and Assouad \cite{BSMF_1983__111__429_0}, and was first used in algorithm design by Clarkson \cite{DBLP:conf/stoc/Clarkson97} in the context of the nearest neighbors problem. Nevertheless, we are not aware of any prior characterization that leverages the doubling dimension in the realm of voting theory. In the sequel, it will be assumed that $(\mathcal{M}, \dist(\cdot, \cdot))$ stands for the metric space induced by the set of candidates and voters. In this context, our first main contribution is the following theorem:

\begin{theorem}
    If $d$ is the doubling dimension of $\mathcal{M}$, then the distortion of $\stv$ is $O(d \log \log m)$.
\end{theorem}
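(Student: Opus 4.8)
The plan is to reduce the claim to a bound on the distance between the $\stv$ winner $w$ and an optimal candidate $c^*$, and then to control that distance through a scale-by-scale analysis of the elimination process in which the doubling dimension caps how many candidates can ``matter'' at each scale. First I would apply the triangle inequality to write $\socialcost(w) \le \socialcost(c^*) + n \cdot \dist(c^*, w)$, where $n$ is the number of voters; hence, setting $D := \socialcost(c^*)/n$ for the average distance to $c^*$, it suffices to show $\dist(c^*, w) = O(d \log\log m) \cdot D$, which then gives distortion $1 + O(d \log \log m)$. By Markov's inequality at least $n/2$ of the voters lie in $\mathcal{B}(c^*, 2D)$; call these the \emph{central} voters. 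The role of this step is that each central voter casts its ballot for whichever active candidate is closest to it, so as long as \emph{some} candidate survives near $c^*$, a mass of $\ge n/2$ votes stays concentrated on candidates close to $c^*$.

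Next I would introduce geometric scales around $c^*$ and track the \emph{frontier} $\rho_t := \min\{\dist(a,c^*) : a \text{ active at round } t\}$, which is non-decreasing and equals $\dist(c^*, w)$ at termination. The frontier advances only when the closest surviving candidate is eliminated, and this is where two opposing forces meet. On the one hand, a candidate near $c^*$ inherits a large share of the $n/2$ central votes and therefore cannot carry the least plurality --- and so cannot be eliminated --- unless that central mass is split among several competitors lying at comparable distance from $c^*$. On the other hand, the doubling dimension bounds the number of such well-separated competitors: a ball of radius $R$ around $c^*$ admits a $\rho$-net of size at most $2^{O(d \log(R/\rho))} = (R/\rho)^{O(d)}$, so at any fixed scale only $2^{O(d)}$ candidates can be mutually separated at that scale's resolution. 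Candidates clustered more tightly than this behave, for the purpose of absorbing central votes, like a single ``super-candidate'': eliminating one of them merely transfers its voters to a neighbour at the same scale, so the frontier does not advance.

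I would then argue that forcing the frontier across one scale requires dismantling an entire net's worth of central mass, and that sustaining the frontier across \emph{successive} scales imposes a recursively growing demand on the pool of candidates --- each additional scale must be ``paid for'' by a vote-splitting sub-structure that itself replicates the structure used at the previous scale within each of the $2^{O(d)}$ net positions. Since the total candidate budget is $m$, this self-similar demand can be met for only $O(\log\log m)$ scales, with the $2^{O(d)}$ packing bound entering multiplicatively as the $d$ factor. This refines the scale argument of Skowron and Elkind \cite{DBLP:conf/aaai/SkowronE17} by capping the per-scale candidate count at $2^{O(d)}$ rather than the trivial $m$, and collecting the contribution of the $O(d \log\log m)$ scales the frontier can traverse yields $\dist(c^*, w) = O(d \log\log m)\, D$.

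I expect the main obstacle to be precisely this accounting: cleanly relating the \emph{combinatorial} elimination dynamics --- which, under the parallel-universe model, are governed by an adversarial tie-breaking rule and an adversarial order of eliminations, so the bound must hold for every survivable $w$ --- to the \emph{geometric} scale structure. The delicate points are (i) showing that advancing the frontier by a full scale is impossible until a packing's worth of central mass has been dismantled, (ii) handling candidate clustering so that tightly packed candidates cannot circumvent the net bound, and (iii) extracting the doubly-logarithmic dependence on $m$ in place of the single logarithm of \cite{DBLP:conf/aaai/SkowronE17}. Step (iii) is exactly where the doubling dimension must be leveraged, and getting the recursion and its quantitative constants right is the crux of the argument.
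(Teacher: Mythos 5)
Your opening reduction is sound and matches the paper's: bounding $\dist(c^*,w)$ in units of the average distance $D$ is the contrapositive of \Cref{lemma:wave}, and the Markov step giving $n/2$ central voters in $\mathcal{B}(c^*,2D)$ plays the role of the paper's assumption that a $\gamma>2/3$ fraction of voters lies in a small ball around the optimum. You have also correctly located where the doubling dimension must enter, namely through covering and packing numbers of balls around $c^*$. But the heart of the theorem --- why the answer is $d\log\log m$ rather than $\log m$ --- rests in your sketch entirely on the ``self-similar demand'' recursion, and that step does not work as described. If crossing each new scale requires replicating the previous scale's vote-splitting structure inside each of $2^{O(d)}$ net positions, then the candidate demand after $k$ scales is $N_k = 2^{O(d)}\cdot N_{k-1} = 2^{O(dk)}$, and the budget $N_k\le m$ yields $k=O(\log m)$ scales; i.e., the accounting as written reproves a variant of the Skowron--Elkind $O(\log m)$ bound and never produces a $\log\log m$. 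A doubly-logarithmic scale count would require the demand to grow doubly exponentially in $k$, and no mechanism for that is given; you yourself flag this recursion as the unresolved crux. A secondary gap is the ``super-candidate'' abstraction: when a clustered candidate is eliminated, each of its central supporters moves to that \emph{voter's} own next-preferred active candidate, which need not lie in the same cluster. The jump is controlled only if some voter in the same tiny ball currently supports a \emph{different} active candidate, and maintaining that invariant requires explicit bookkeeping that the sketch omits.

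The paper extracts the $\log\log m$ by a different two-phase accounting. In Phase I it covers $\mathcal{B}(x,r)$ by $\mu\le\lambda^{\log(1/\epsilon)+1}$ balls of radius $\epsilon r$ with $\epsilon:=1/\mathcal{H}_m$, discards (at a bounded cost in voters) any small ball whose occupants all support the candidate currently being eliminated --- this is precisely what licenses the ``nearby voter supports a different active candidate'' step above --- and shows that each of the at most $n/(m-t+1)$ voters who recast at round $t$ drifts by only $O(\epsilon r)$; summing the harmonic series gives total drift $O(\epsilon r\,\mathcal{H}_m)=O(r)$ over the first $m-M$ rounds, where $M=6\mu=\lambda^{O(\log\log m)}$. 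Phase II then runs the concentric-shell argument of Skowron and Elkind on the remaining $M$ candidates, halving the number of candidates outside each successive shell, so only $h=O(\log_2 M)=O(d\log\log m)$ shells can be crossed before a contradiction with the survival of $w$. The $\log\log m$ is thus the logarithm of the covering number at resolution $1/\mathcal{H}_m$, not the output of a self-similar candidate recursion; your proposal is missing this mechanism and therefore has a genuine gap at its central step.
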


For \emph{doubling metrics}\footnote{A doubling metric refers to a metric space with doubling dimension upper-bounded by some universal constant.} this theorem already implies an exponential improvement in the distortion over the $\Omega(\sqrt{\log m})$ lower bound for general metrics. Moreover, it addresses as a special case \Cref{question:STV}:

\begin{corollary}
The distortion of $\stv$ under low-dimensional Euclidean spaces is $O(\log \log m)$.
\end{corollary}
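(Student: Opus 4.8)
The plan is to derive the corollary as an immediate consequence of the preceding theorem, so the only real work lies in pinning down the doubling dimension of the relevant space. First I would recall the classical fact, already noted in the text, that $\dim(\mathbb{R}^d) = \Theta(d)$ whenever $\mathbb{R}^d$ is equipped with an $\ell_p$ norm. The upper bound here---which is the direction we actually need---follows from a standard volumetric packing argument: a ball of radius $2r$ in an $\ell_p$ norm has volume $2^d$ times that of a ball of radius $r$, so a maximal $r$-separated subset of it has cardinality $2^{O(d)}$ by a packing bound, and the $r$-balls around such a subset cover it. Thus the doubling constant is $2^{O(d)}$ and the doubling dimension is $O(d)$.

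The second point to address is that the theorem concerns the metric space $\mathcal{M}$ induced by the candidates and voters, which is a finite \emph{subset} of the ambient Euclidean space rather than the whole space. Here I would invoke the standard monotonicity of doubling dimension under passing to subspaces: if $(\mathcal{X}, \dist)$ has doubling constant $\lambda$, then any subset $(\mathcal{Y}, \dist)$ endowed with the restricted metric has doubling constant at most $\lambda^2$, hence doubling dimension at most twice that of the ambient space. The argument is to cover a $\mathcal{Y}$-ball of radius $2r$ by $\lambda^2$ ambient balls of radius $r/2$---obtained by doubling twice---and then to replace each such ball that meets $\mathcal{Y}$ by a $\mathcal{Y}$-centered ball of radius $r$ containing it. Consequently $\dim(\mathcal{M}) = O(d)$.

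Finally, in a low-dimensional Euclidean space we have $d = O(1)$ by definition, so combining the two facts above yields $\dim(\mathcal{M}) = O(1)$. Plugging this into the theorem, the distortion of $\stv$ becomes $O(\dim(\mathcal{M}) \cdot \log \log m) = O(\log \log m)$, as claimed. Since the corollary is a direct specialization of the theorem, I expect no genuine obstacle beyond these routine metric-geometry facts; the substantive difficulty resides entirely in the proof of the theorem itself.
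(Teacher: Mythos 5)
Your proposal is correct and matches the paper's intent: the corollary is stated as an immediate specialization of \Cref{theorem:STV-doubling}, using $\dim(\mathbb{R}^d)=\Theta(d)$ and the fact (noted in \Cref{section:preliminaries}) that passing to the submetric induced by voters and candidates changes the doubling dimension by at most a constant factor. The routine metric-geometry details you supply (volumetric packing and subspace monotonicity) are exactly the standard facts the paper implicitly relies on.
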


To the best of our knowledge, this is the first result that relates the performance of any voting rule to the ``intrinsic dimensionality'' of the underlying metric space. It also corroborates the experimental findings of Elkind et al. \cite{DBLP:conf/aaai/ElkindFLSST17} regarding the superiority of $\stv$ on the $2$-dimensional Euclidean plane. More broadly, we suspect that our characterization applies for a wide range of iterative voting rules, to which $\stv$ serves as a canonical example. We should note that the $O(\log \log m)$ factor appears to be an artifact of our analysis. Indeed, we put forward the following conjecture:

\begin{conjecture}
If $d$ is the doubling dimension of $\mathcal{M}$, then the distortion of $\stv$ is $O(d)$.
\end{conjecture}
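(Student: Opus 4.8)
\medskip

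\noindent\emph{A possible route towards the conjecture.}
The plan is to start from the chain argument underlying the $O(d\log\log m)$ bound and to excise the single place where the surplus $\log\log m$ factor is paid. That bound can be read as tracking, across the $m-1$ elimination rounds, a chain of \emph{surviving representatives} $c^* = c_0, c_1, \dots, c_L = w$ whose distances to the optimum grow geometrically; the doubling dimension caps the number of candidates available at each dyadic scale around $c^*$, and a chain length $L = O(\log\log m)$ bounds how many scales the representative can climb. To reach the conjectured $O(d)$ I would discard this level-by-level chain and instead charge the winner's displacement \emph{once and for all} against pairwise disjoint blocks of voters, so that the number of scale jumps is governed by a packing number rather than by $m$.

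Concretely, fix the optimal candidate $c^*$ and put $r = \socialcost(c^*)/n$; by Markov's inequality a constant fraction of the voters lie in the core ball $\mathcal{B}(c^*, O(r))$, which I call $V_0$. For each dyadic radius $2^j r$ take a $2^{j-1} r$-net $N_j$ of $\mathcal{B}(c^*, 2^j r)$, so that $|N_j| = 2^{O(d)}$ by the doubling property. The decisive ingredient would be a \emph{survival invariant}: as long as some active candidate lies within $2^j r$ of $c^*$, every voter of $V_0$ votes for a candidate within $O(2^j r)$ of $c^*$, and I would like to conclude that at least one such candidate accumulates $\Omega(|V_0|/2^{O(d)})$ first-place votes and therefore escapes elimination in that round. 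Granting this, the representative advances from scale $2^j r$ to $2^{j+1} r$ only after the near-optimal candidates have been exhausted; charging each advance to a fresh block of $\Omega(|V_0|/2^{O(d)})$ core voters, which are disjoint across advances, caps the total number of advances at $2^{O(d)}$, independently of $m$. Summing the geometric increments gives $\dist(w, c^*) = O(d)\cdot r$, whence $\socialcost(w) \le \socialcost(c^*) + n\,\dist(w, c^*) = O(d)\,\socialcost(c^*)$, i.e.\ distortion $O(d)$.

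The hard part---and the reason the bound is, for now, only conjectural---is the survival invariant, and the difficulty is structural rather than cosmetic. The packing bound controls only the number of \emph{well-separated} candidates near $c^*$, whereas $\stv$ is exquisitely sensitive to the vote count of each \emph{individual} candidate: a near-optimal candidate with a thin basin can be the global minimum and be eliminated while the representative still sits at a low scale. One must therefore make the net argument robust both to many candidates clustered inside a single net cell and to the adversarial recycling of the voters freed by each elimination---exactly the ``staircase'' behaviour that the lower bound of Skowron and Elkind \cite{DBLP:conf/aaai/SkowronE17} exploits in general metrics. Reconciling the doubling-dimension packing bound with per-candidate vote counts, presumably through a hierarchical (HST-style) decomposition rather than a single net at each scale, is the step I expect to be decisive.
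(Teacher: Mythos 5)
The statement you are addressing is stated in the paper as an open \emph{conjecture}: the authors explicitly write that the $\log\log m$ factor ``appears to be an artifact of our analysis,'' prove only the weaker $O(d\log\log m)$ bound (\Cref{theorem:STV-doubling}) together with the one-dimensional case (\Cref{theorem:STV-line}), and leave the general $O(d)$ bound open. So there is no proof in the paper to compare against, and---by your own admission---your proposal is not a proof either: everything hinges on the ``survival invariant,'' which you correctly flag as unestablished. That flag is well placed, and the obstruction is exactly the one the paper's Phase~I is designed to work around. The doubling property bounds the number of \emph{net points} at each scale by $2^{O(d)}$, but not the number of \emph{candidates} supported by core voters, since arbitrarily many candidates can sit inside a single net cell; hence the pigeonhole step ``some near-optimal candidate accumulates $\Omega(|V_0|/2^{O(d)})$ first-place votes'' does not follow, and a near-optimal candidate with a thin basin can be the global plurality minimum and be eliminated while the representative still sits at a low scale. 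This is precisely why the paper passes to the pruned subset $\mathcal{S}$ of the core and takes the covering radius $\epsilon = 1/\mathcal{H}_m$, which is where the $\log\log m$ is paid.

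Beyond the admitted gap, the bookkeeping in your sketch would not yield $O(d)$ even if the invariant were granted. If the representative climbs through dyadic scales $2^j r$ and the number of advances is capped only at $2^{O(d)}$, the resulting displacement is $2^{2^{O(d)}}\cdot r$ (the geometric sum is dominated by its last term), not $O(d)\cdot r$; to conclude $O(d)$ you would need to cap the number of scale jumps at $\log_2 d + O(1)$, and no mechanism for that is offered. Moreover, the claim that each advance is charged to a ``fresh,'' pairwise disjoint block of core voters is in tension with how $\stv$ recycles support: after each elimination the \emph{same} core voters transfer their votes, so the blocks charged to successive advances are generally identical rather than disjoint. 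As it stands the proposal is a reasonable research program whose decisive lemma coincides with the open problem itself, not a proof.
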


Verifying this conjecture in light of our result might be of small practical importance, but nonetheless we believe that it can be established by extending our techniques. In fact, for one-dimensional spaces we actually confirm this conjecture, proving that the distortion of $\stv$ on the line is $O(1)$ in \Cref{theorem:STV-line}. It should be noted, however, that the underlying phenomenon is inherently different once we turn our attention to higher-dimensional spaces. In addition, to complement our positive results we refine the lower bound of Skowron and Elkind \cite{DBLP:conf/aaai/SkowronE17}, showing an $\Omega(\sqrt{d})$ lower bound, where $d$ represents the doubling dimension of the submetric induced by the set of candidates $\mathcal{M}_C$. Thus, it should be noted that there are still small gaps left to be bridged in future research.

\textit{Other Notions of Dimension}. An important advantage of the doubling dimension is that it essentially subsumes other commonly-used notions of dimension. Most notably, Karger and Ruhl \cite{DBLP:conf/stoc/KargerR02} have introduced a concept of dimension based on the \emph{growth rate} of a (finite) metric space, and it is known (\cite[Proposition 1.2]{DBLP:conf/focs/GuptaKL03}) that the doubling dimension can only be a factor of $4$ larger than the growth rate of Karger and Ruhl. Moreover, a similar statement applies for the \emph{local density} of an unweighted graph, another natural notion of volume that has been employed in the analysis of a graph's bandwidth \cite{FEIGE2000510}.

\textit{High-Level Intuition.} In this paragraph we briefly attempt to explain why the distortion of $\stv$ depends on the ``covering dimension'' of the underlying metric space. First, we have to describe the technique developed by Skowron and Elkind \cite{DBLP:conf/aaai/SkowronE17}. Specifically, their method for deriving an upper bound for the distortion of an iterative voting rule consists of letting a substantial fraction of agents reside within close proximity to the optimal candidate, and then analyze how the support of these agents propagates throughout the evolution of the iterative process. More precisely, the overall distance covered immediately implies an upper bound on the distortion (see \Cref{lemma:wave}). The important observation is that the underlying dimension drastically affects this phenomenon. In particular, when a large fraction of agents lies in a low-dimensional ball supporting many different candidates, we can infer that their (currently) second most-preferred alternatives ought to be ``close''---for most of the agents---by a covering argument (and the triangle inequality). This directly circumscribes the propagation of the support, as hinted in \Cref{fig:sub2}, juxtaposed to the phenomenon in high dimensions in \Cref{fig:sub1}. We stress that we shall make use of this basic skeleton developed by Skowron and Elkind \cite{DBLP:conf/aaai/SkowronE17}. We should also remark that we prove the $O(\log m)$ bound under general metrics through a simpler analysis (see \Cref{theorem:STV-general_metrics}), which incidentally reveals a very clean recursive structure; this argument will be directly invoked for the proof of our main theorem.

\begin{figure}[!ht]
\centering
\begin{subfigure}{.5\textwidth}
  \centering
  \includegraphics[scale=0.43]{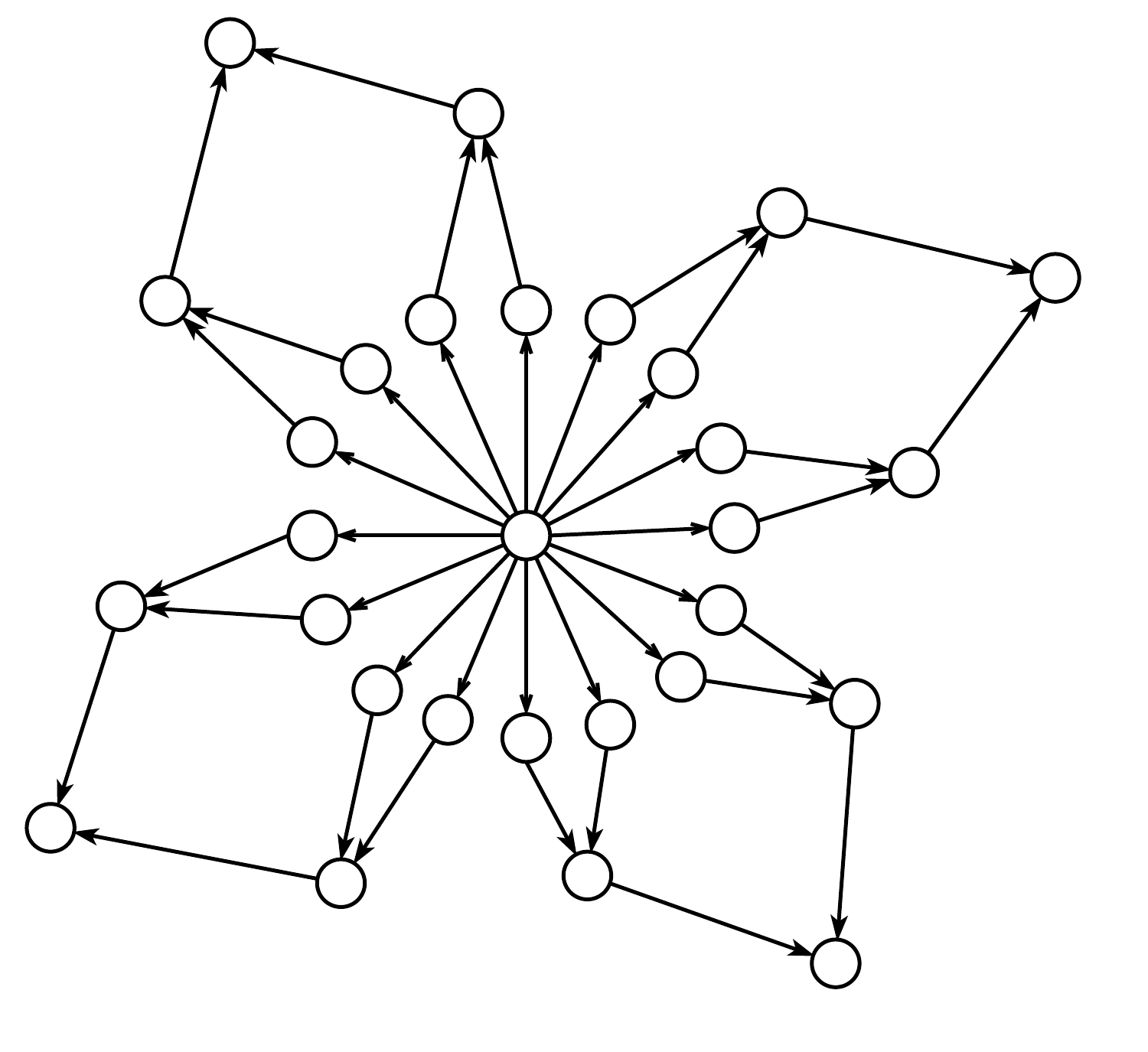}
  \caption{The propagation of the support in \\high dimensions.}
  \label{fig:sub1}
\end{subfigure}%
\begin{subfigure}{.5\textwidth}
  \centering
  \includegraphics[scale=0.35]{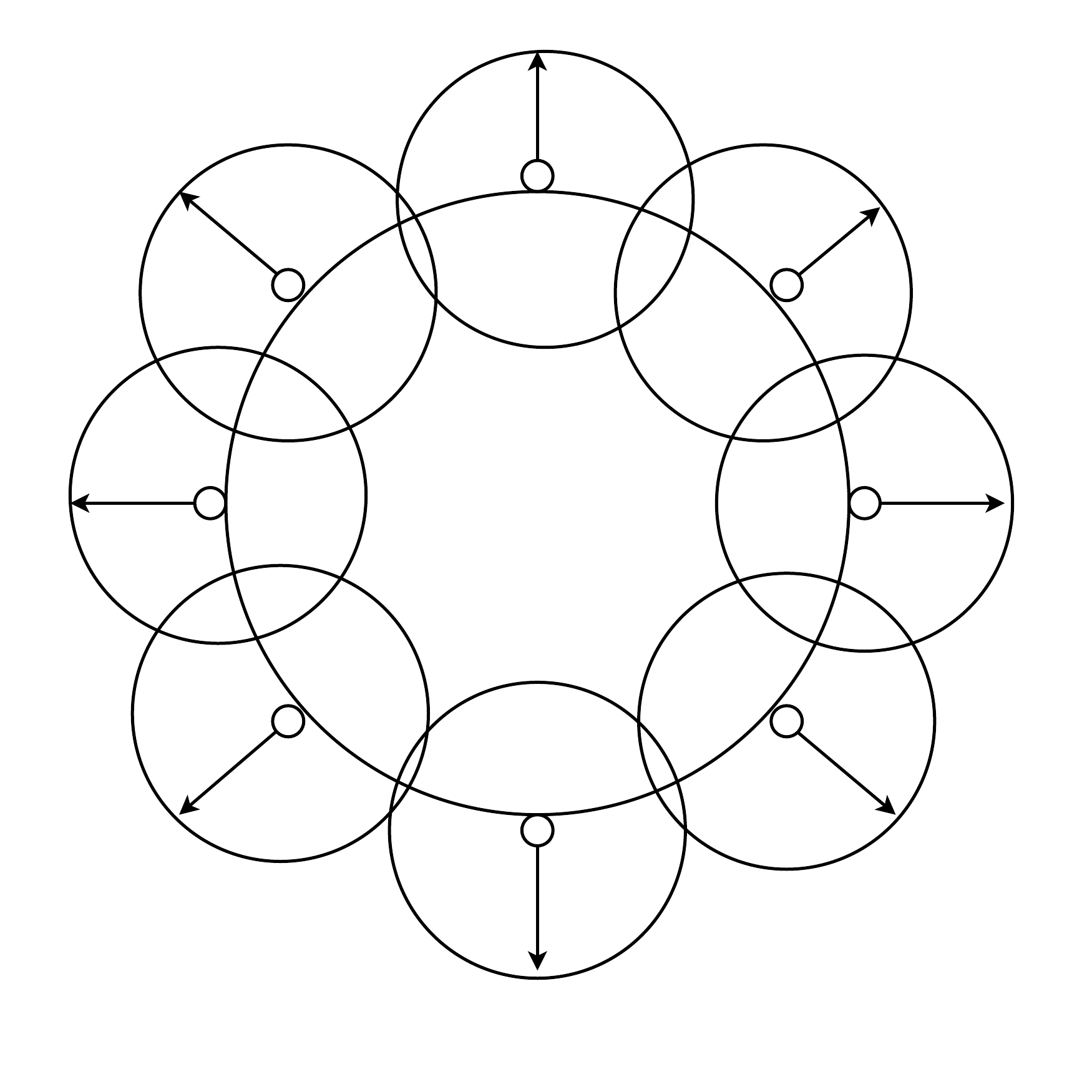}
  \caption{The propagation of the support in \\low dimensions.}
  \label{fig:sub2}
\end{subfigure}
\caption{The impact of the underlying dimension on $\stv$.}
\label{fig:test}
\end{figure}

The next theme of our work is motivated by the performance of $\stv$, and in particular offers a preliminary answer to \Cref{question:dynamics}. Specifically, to formally address such questions we first propose a natural iterative model: In each day every agent has to select a \emph{single} candidate, and at the end of the round agents are informed about the (plurality) scores of all the candidates (cf., see \cite{BorodinL0S19}). This process is repeated for sufficiently many days, and it is assumed that the candidate who enjoyed the largest amount of support in the ultimate day will eventually prevail. Observe that in this scenario truthful engagement appears to be very unrealistic since agents would endeavor to adapt their support based on the popularity of each candidate; for example, it would make little sense to squander one's vote (at least towards the last stages) to an unpopular candidate. More broadly, there is an interesting nexus between distortion and stability, as we elaborate in \Cref{section:coordination}, emphasizing on a connection with the notion of \emph{core} in cooperative game theory (\Cref{proposition:core}). 

In this context, $\stv$ already suggests a particularly natural strategic engagement, improving exponentially over the outcome of the truthful dynamics. Yet, it yields super-constant distortion due to the greedy aspect of the induced dynamics. We address this issue by designing a simple and decentralized exploration/exploitation scheme: 

\begin{theorem}
    There exist simple, deterministic and distributed dynamics that converge to a candidate with $O(1)$ distortion.
\end{theorem}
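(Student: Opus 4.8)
The plan is to decouple the statement into two largely independent tasks: (i) isolate a purely ordinal, ``stable'' solution concept every representative of which is guaranteed to have $O(1)$ distortion, and (ii) exhibit deterministic, decentralized plurality dynamics whose eventual state is such a representative. For (i) I would work within the coordination game of \Cref{section:coordination}: call a candidate $c$ \emph{stable} if no alternative is preferred to $c$ by a strict majority of the agents that could enforce the deviation. A weak Condorcet winner is stable, and by the core/distortion connection of \Cref{proposition:core} stability already costs only a constant factor over the optimum; concretely a weak Condorcet winner $c$ satisfies $\socialcost(c) \le 3\,\socialcost(o)$, where $o$ is an optimal candidate. Since a weak Condorcet winner need not exist, I would relax the target to the \emph{uncovered set}, where $a$ covers $c$ if $a$ beats $c$ in a pairwise majority and $a$ also beats every candidate that $c$ beats. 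The covering relation is a strict partial order, hence acyclic, so its maximal elements---the uncovered candidates, among them the Copeland winner---always exist. An uncovered candidate $c$ still has $O(1)$ distortion by the standard two-step chain argument: if $o$ beats $c$, then since $c$ is uncovered there is a witness $z$ with $c$ beating $z$ and $z$ beating $o$, and telescoping the triangle inequality along these two majorities bounds $\socialcost(c)$ by $5\,\socialcost(o)$.

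The dynamics then comprise an exploration phase followed by an exploitation phase, both driven solely by the published plurality scores. In the exploration phase a fixed schedule (a function of the round index, common to all agents, so that no coordinator is needed) designates on each day an unordered pair $\{a,b\}$ of candidates, and every agent casts its single vote for whichever of $a,b$ it ranks higher. As the two tallies sum to $n$, the announced scores reveal the majority winner of $a$ versus $b$; after the $\binom{m}{2}$ scheduled comparisons every agent has reconstructed the entire majority tournament from the same public data. In the exploitation phase each agent applies an identical deterministic rule to this common tournament---select the lexicographically-first uncovered candidate $c^{\star}$, e.g. the Copeland winner---and votes for $c^{\star}$ in every subsequent round. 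The plurality leader is then $c^{\star}$ forever after, so the ultimate-day winner stabilizes to $c^{\star}$; by construction the process is deterministic and fully decentralized.

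Convergence is immediate because the schedule is finite and the post-schedule state is constant; the reason the dynamics must be organized this way, rather than as a naive ``switch to any majority-preferred challenger'' best response, is that the latter can cycle indefinitely whenever the tournament contains a Condorcet cycle. For a more genuinely adaptive variant I would instead let the incumbent move only to a candidate that \emph{covers} it: acyclicity of the covering relation caps the number of such switches by $m$ and forces termination at an uncovered candidate, which is exactly the monotone, strictly advancing structure a deterministic process needs in order to converge. Combining the terminal guarantee of the exploration/exploitation scheme with step (i) yields that the candidate to which the dynamics converge has distortion $O(1)$.

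The main obstacle is reconciling three competing requirements simultaneously: determinism and decentralization (no randomness, no central coordinator), guaranteed convergence despite Condorcet cycles, and a provable constant bound on the terminal candidate. The first is dispatched by the common round schedule and common tie-breaking; the crux is the interaction of the latter two, namely ensuring that every fixed point the agents are permitted to reach coincides with an acyclic, ordinally-checkable solution concept whose representatives are at once always present (maximal elements of the covering order) and provably near-optimal. The delicate quantitative step is therefore the chain bound underlying \Cref{proposition:core}: one must verify that the triangle-inequality telescoping along a length-two majority path from the optimum to the stable candidate loses only a constant factor, and that this factor is not inflated by the relaxation from Condorcet winners to uncovered candidates.
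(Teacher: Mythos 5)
Your proposal is correct, but it takes a genuinely different route from the paper's. The paper's $\coordination$ dynamics keep every agent anchored to her own ranking: in the exploration phase agent $i$ simply enumerates the candidates in decreasing order of her own preference (one new candidate per round), and in the exploitation phase she supports the most-preferred candidate on her own list whose cumulative tally has reached $n/2$; the analysis is geometric, showing that at most half the voters can lie in $\mathcal{B}(x,\dist(x,w)/5)$ around the optimum $x$ and then invoking \Cref{lemma:wave} to obtain distortion at most $11$. You instead use the published plurality scores as a broadcast channel: a common schedule of $m(m-1)/2$ pairwise contests lets every agent reconstruct the full majority tournament, after which all agents deterministically agree on an uncovered (e.g.\ Copeland) candidate and vote for it forever. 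Your route buys a better constant (the known uncovered-set/Copeland bound of $5$ versus the paper's $11$) and reduces the problem to an off-the-shelf result, at the price of $\Theta(m^2)$ rounds instead of $m$ and of dynamics that are less ``behavioral'': during your exploration phase agents must vote for scheduled candidates they may rank last, and the exploitation phase is really a distributed emulation of a centralized rule rather than a local response to the tallies. Two small caveats, neither of which affects the $O(1)$ claim: your one-line telescoping along the length-two majority path does yield a constant, but pinning it to exactly $5$ requires the more careful bookkeeping of Anshelevich et al.\ rather than naive iteration of the single-step bound; and with an even number of voters you must fix a convention for pairwise ties so that the weak-majority digraph remains complete and the two-step ``king'' property of the Copeland winner survives.
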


We elaborate on the proposed dynamics, as well as on all the aforementioned issues in \Cref{section:coordination}.

The final contribution of our work concerns refinements and extensions of prior results under metric preferences, providing new insights along two main lines. First, we study preference aggregation under ordinal information when agents and candidates are located in \emph{ultra-metric spaces}, which is a strengthening of the standard metric assumption. This setting is mostly motivated by the fundamental \emph{bottleneck} variant in facility location games, wherein the cost of a path between an agent and a server corresponds to the largest weight among the edges in the path, instead of their sum \cite{GABOW1988411}; it should be noted that ultra-metrics also commonly arise in branches of mathematics such as metric geometry \cite{ABRAHAM20113026}, and $p$-adic analysis \cite{rankin_1966}. In this context, our main observation is that the $\pluralitymatching$ mechanism of Gkatzelis, Halpern, and Shah \cite{DBLP:conf/focs/Gkatzelis0020} always obtains distortion $2$, which incidentally is the provable lower bound for any deterministic mechanism. It is particularly interesting that the optimal mechanism under metric spaces retains its optimality under an important refinement, illustrating the robustness of $\pluralitymatching$.

We also study the performance of $\pluralitymatching$ under distance functions that satisfy a $\rho$\emph{-relaxed} triangle inequality. This consideration is directly driven by the fact that many well-studied and commonly-arising distances are only approximately metrics (most notably, the \emph{squared Euclidean} distance is a $2$-approximate metric), but we believe that there is another concrete reason. Most research in the realm of distortion has thus far been divided between the metric case and the \emph{unit-sum} case, with these two lines of research being largely disconnected. Studying approximate metrics serves as an attempt to bridge this gap. In this context, we prove a lower bound of $\rho^2 + \rho + 1$, while $\pluralitymatching$ incurs distortion at most $2\rho^2 + \rho$, thus leaving a small gap for future research. Notice that for the special case $\rho := 1$ this recovers the result of Gkatzelis, Halpern, and Shah \cite{DBLP:conf/focs/Gkatzelis0020}.

\subsection{Related Work}

The framework of distortion under \emph{metric preferences} was first introduced a few years ago by Anshelevich et al. \cite{DBLP:conf/aaai/AnshelevichBP15} (see also \cite{DBLP:journals/ai/AnshelevichBEPS18}). Specifically, they observed a lower bound of $3$ for any deterministic mechanism, while they also showed---among others---that Copeland's method, a very popular voting system, always incurs distortion at most $5$, with the bound being tight for certain instances. This threshold was subsequently improved by Munagala and Wang \cite{DBLP:conf/ec/MunagalaW19}, introducing a novel (deterministic) mechanism with distortion $2 + \sqrt{5}$, while the same bound was independently obtained by Kempe \cite{DBLP:conf/aaai/000120a} through an approach based on LP duality. The lower bound of $3$ was only recently matched by $\pluralitymatching$, a mechanism introduced by Gkatzelis, Halpern, and Shah \cite{DBLP:conf/focs/Gkatzelis0020}. In \Cref{section:robustness} we investigate the performance of this mechanism under certain refinements and extensions, leveraging an important property established in \cite{DBLP:conf/focs/Gkatzelis0020} regarding the existence of a \emph{perfect fractional matching} on a certain bipartite graph. 

All of the aforementioned results apply under arbitrary metric spaces. Several special cases have also attracted attention in the literature. For one-dimensional spaces, Feldman et al. \cite{DBLP:conf/sigecom/FeldmanFG16} establish several improved bounds, while a comprehensive characterization in a \emph{distributed} setting was recently given by Filos-Ratsikas and Voudouris in \cite{DBLP:journals/corr/abs-2007-06304}. Another notable refinement germane to our considerations in \Cref{section:robustness} was studied by Anshelevich and Postl \cite{DBLP:conf/ijcai/AnshelevichP16} in the form of $\alpha$-\emph{decisiveness}, imposing that voters support their top choices by a non-negligible margin. This condition has led to several refined upper and lower bounds; cf. see \cite{DBLP:conf/focs/Gkatzelis0020}. The interested reader is referred to the concise survey of Anshelevich et al. \cite{anshelevich2021distortion} for detailed accounts on the rapidly growing literature on the subject. Moreover, for related research beyond the framework of distortion we refer to~\cite{Gershkov19:Voting}, and references therein.

The model we introduce in \Cref{section:coordination} is related to the seminal work of Branzei, Caragiannis, Morgenstern, and Procaccia \cite{DBLP:conf/aaai/BranzeiCMP13} (see also the extensive follow-up work, such as \cite{DBLP:conf/aaai/ObraztsovaMPRJ15}), viewing voting from the standpoint of \emph{price of anarchy} (PoA). In particular, the authors study the discrepancy between the plurality scores under truthfulness, and under worst-case limit points of \emph{best-response} dynamics. Instead, we argue that the utilitarian performance of a voting rule---in terms of distortion---offers a very compelling alternative to study this discrepancy, similarly to the original formulation of PoA in the context of routing games \cite{DBLP:conf/stacs/KoutsoupiasP99}, while going beyond best-response dynamics is very much in line with the modern approach in the context of learning in games \cite{DBLP:books/daglib/0016248}. Finally, we stress that \Cref{question:dynamics} has already received extensive attention in the literature (cf. see \cite{DBLP:conf/aaai/BrillC15,DBLP:conf/wine/ZuckermanFCR11} and references therein), but it was not addressed within the framework of (metric) distortion.

\section{Preliminaries}
\label{section:preliminaries}

A \emph{metric space} is a pair $(\mathcal{M}, \dist(\cdot, \cdot))$, where $\dist: \mathcal{M} \times \mathcal{M} \mapsto \mathbb{R}$ is a \emph{metric} on $\mathcal{M}$, i.e., (i) $\forall x,y \in \mathcal{M}, \dist(x, y) = 0 \iff x = y$ (identity of indiscernibles), (ii) $\forall x, y \in \mathcal{M}, \dist(x,y) = \dist(y, x)$ (symmetry), and (iii) $\forall x, y, z \in \mathcal{M}, \dist(x, y) \leq \dist(x, z) + \dist(z, y)$ (triangle inequality). Now consider a set of $n$ voters $V = \{1, 2, \dots, n \}$, and a set of $m$ candidates $C$; we will reference candidates with lowercase letters such as $a, b, w, x$. Voters and candidates are associated with points in a finite metric space $(\mathcal{M}, \dist)$, while it is assumed that $\mathcal{M}$ is the (finite) set induced by the set of voters and candidates. The goal is to select a candidate $x$ who minimizes the \emph{social cost}: $\socialcost(x) = \sum_{i=1}^{n} \dist(i, x)$. This task would be trivial if we had access to the agents' distances from all the candidates. However, in the \emph{metric distortion} framework every agent $i$ provides only a \emph{ranking} (a total order) $\sigma_i$ over the points in $C$ according to the \emph{order} of $i$'s distances from the candidates, with ties broken arbitrarily. We also define $\sigma:= (\sigma_1, \dots, \sigma_n)$, while we will sometimes use $\topp(i)$ to represent $i$'s most preferred alternative.

A \emph{deterministic} \emph{social choice rule} is a function that  maps an \emph{election} in the form of a $3$-tuple $\mathcal{E} = (V, C, \sigma)$ to a single candidate $a \in C$. We will measure the performance of $f$ for a given input of preferences $\sigma$ in terms of its \emph{distortion}; namely, the worst-case approximation ratio it provides with respect to the social cost:

\begin{equation}
    \distortion(f; \sigma) = \sup \frac{\socialcost(f(\sigma))}{\min_{a \in C} \socialcost(a)},
\end{equation}
where the supremum is taken over all metrics consistent with the voting profile. The distortion of a social choice rule $f$ is the maximum of $\distortion(f; \sigma)$ over all possible input preferences $\sigma$. To put it differently, once the mechanism selects a candidate (or a distribution over candidates if the social choice rule is \emph{randomized}) an adversary can select any metric space subject to being consistent with the input preferences. These definitions naturally apply for refinements and extensions studied in the present work.

We define the \emph{open ball} on the metric space $(\mathcal{M}, \dist)$ with center $x \in \mathcal{M}$ and radius $r > 0$ as $\mathcal{B}(x, r) := \{z \in \mathcal{M} : \dist(z, x) < r \}$. An alternative definition for the doubling dimension considers the diameter of subsets, instead of the radius of balls; that is, the doubling constant is the smallest value of $\lambda$ such that every subset of $\mathcal{M}$ can be covered by at most $\lambda$ subsets of (at most) half the diameter. According to this definition, for any submetric $\mathcal{X} \subseteq \mathcal{M}$ it follows that $\dim(\mathcal{X}) \leq \dim(\mathcal{M})$. Nonetheless, it will be convenient to work with the initial notion (\Cref{definition:doubling_dimension}) since switching between the two definitions can only affect the dimension by at most a factor of $2$ (see \cite{DBLP:conf/focs/GuptaKL03}). The following standard covering lemma will be useful for the analysis of $\stv$ in doubling metrics.

\begin{lemma}
\label{lemma:covering}
Consider a metric space $(\mathcal{M}, \dist)$ with doubling constant $\lambda \geq 1$. Then, for any $x \in \mathcal{M}$ and $r > 0$, the ball $\mathcal{B}(x, r)$ can be covered by at most $\lambda^{\lceil \log(r/\epsilon) \rceil}$ balls of radius at most $\epsilon$. 
\end{lemma}

\begin{proof}
If we apply apply \Cref{definition:doubling_dimension} successively we can conclude that any ball of radius $r$ can be covered by at most $\lambda^i$ balls of radius $r/2^i$. Thus, taking $i := \lceil \log(r/\epsilon) \rceil$ leads to the desired conclusion.
\end{proof}

It should be noted that (when unspecified) the $\log(\cdot)$ will always be implied to the base $2$. We conclude this section with a useful lemma observed by Skowron and Elkind \cite{DBLP:conf/aaai/SkowronE17}, which will be used for analyzing iterative voting rules.

\begin{lemma}[\cite{DBLP:conf/aaai/SkowronE17}]
    \label{lemma:wave}
    Consider two distinct candidates $a, b \in C$. If $r := \dist(a, b)/h$ for some parameter $h > 0$, and at most $\gamma n$ agents reside in $\mathcal{B}(a, r)$ for some $\gamma \in [0, 1)$, then 
    
    \begin{equation}
        \frac{\socialcost(b)}{\socialcost(a)} \leq 1 + \frac{h}{1 - \gamma}.
    \end{equation}
\end{lemma}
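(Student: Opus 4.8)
The plan is to bound $\socialcost(b)$ from above via the triangle inequality and to bound $\socialcost(a)$ from below using the agents that lie \emph{outside} the ball $\mathcal{B}(a,r)$, then combine the two estimates. The starting point is the observation that, for every agent $i$, the triangle inequality gives $\dist(i,b) \leq \dist(i,a) + \dist(a,b)$. Summing over all $n$ agents yields
\begin{equation}
    \socialcost(b) = \sum_{i=1}^{n} \dist(i,b) \leq \sum_{i=1}^{n} \dist(i,a) + n \cdot \dist(a,b) = \socialcost(a) + n \cdot \dist(a,b).
\end{equation}
So the entire argument reduces to controlling the additive term $n \cdot \dist(a,b)$ relative to $\socialcost(a)$.

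The key step is to exploit the hypothesis on the ball. Since at most $\gamma n$ agents reside in $\mathcal{B}(a,r)$, at least $(1-\gamma)n$ agents lie outside it, and each such agent $i$ satisfies $\dist(i,a) \geq r$ by the definition of the open ball. Discarding the nonnegative contributions of the remaining agents, I would lower-bound
\begin{equation}
    \socialcost(a) = \sum_{i=1}^{n} \dist(i,a) \geq (1-\gamma) n \cdot r,
\end{equation}
which rearranges to $n \leq \socialcost(a)/\big((1-\gamma) r\big)$. Substituting $\dist(a,b) = h r$ then gives $n \cdot \dist(a,b) = n \cdot h r \leq \frac{h}{1-\gamma}\,\socialcost(a)$.

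Plugging this back into the first display produces
\begin{equation}
    \socialcost(b) \leq \socialcost(a) + \frac{h}{1-\gamma}\,\socialcost(a) = \Big(1 + \frac{h}{1-\gamma}\Big)\socialcost(a),
\end{equation}
and dividing by $\socialcost(a) > 0$ yields the claimed ratio. There is no genuinely hard part here; the only point requiring care is making sure the mass used in the two bounds is accounted for consistently — namely that the $(1-\gamma)n$ agents outside the ball are exactly the ones supplying the lower bound on $\socialcost(a)$, while the upper bound on $\socialcost(b)$ must range over \emph{all} $n$ agents. The role of the parameter $h$ is simply to calibrate how the radius $r$ compares to $\dist(a,b)$, so that a smaller ball (larger $h$) around $a$ forces the surviving agents farther from $a$ and thereby tightens the guarantee.
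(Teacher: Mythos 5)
Your proposal is correct and follows essentially the same route as the paper: apply the triangle inequality $\dist(i,b)\leq\dist(i,a)+\dist(a,b)$ termwise, then lower-bound $\socialcost(a)\geq(1-\gamma)nr$ via the at least $(1-\gamma)n$ agents outside $\mathcal{B}(a,r)$. No further comment is needed.
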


\begin{proof}
The triangle inequality implies that
\begin{align*}
    \frac{\socialcost(b)}{\socialcost(a)} = \frac{\sum_{i \in V} \dist(i, b)}{\sum_{i \in V} \dist(i, a)} &\leq \frac{\sum_{i \in V} (\dist(i, a) + \dist(a, b))}{\sum_{i \in V} \dist(i, a)} \\
    &= 1 + n \frac{\dist(a,b)}{\sum_{i \in V} \dist(i, a)} \\
    &\leq 1 + \frac{\dist(a,b)}{(1 - \gamma) r} \\
    &= 1 + \frac{h}{1 - \gamma}.
\end{align*}
\end{proof}

\section{STV in Doubling Metrics}

\subsection{STV on the Line} As a warm-up, we will analyze the performance of $\stv$ on the line. In particular, the purpose of this subsection is to establish the following result:

\begin{theorem}
    \label{theorem:STV-line}
    The distortion of $\stv$ on the line is at most $15$.
\end{theorem}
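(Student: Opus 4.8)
The plan is to reduce the claim to showing that the $\stv$ winner cannot be far from the optimal candidate, and then to exploit the rigidity of the line to rule out a distant winner. Fix the optimal candidate $a^* \in C$ and set $\mu := \socialcost(a^*)/n$. A Markov-type argument shows that at most $n/2$ voters lie at distance $\geq 2\mu$ from $a^*$, so at least half of the voters sit in the interval $\mathcal{B}(a^*, 2\mu) = (a^* - 2\mu, a^* + 2\mu)$. On the other hand, if $w$ denotes the $\stv$ winner, the triangle inequality gives $\socialcost(w) \leq \socialcost(a^*) + n\,\dist(a^*, w)$, so the distortion is at most $1 + \dist(a^*, w)/\mu$. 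Hence it suffices to prove that $\dist(a^*, w) \leq 14\mu$, which yields the claimed bound of $15$. Normalizing $a^* = 0$ and assuming without loss of generality that $w \geq 0$, the whole problem becomes: \emph{the $\stv$ winner on the line lies within $O(\mu)$ of the optimum}.

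For the remainder I would argue by contradiction, supposing $w > 14\mu$, and track the $\stv$ dynamics directly on the line. The key structural feature is that at any round the active candidates partition the line into consecutive Voronoi intervals; the support of a candidate is exactly the number of voters in its interval, and when the candidate of least support is eliminated its interval is absorbed into those of its immediate neighbours. Crucially, this transfer is \emph{local}: on the line a voter's support always moves to a positionally adjacent active candidate, never to a distant one. Since $a^*$ itself is an active candidate located at $0$, and at least half of the voters lie in $(-2\mu, 2\mu)$, I would introduce a potential $\Phi(t)$ equal to the number of these central voters whose currently supported candidate is ``central'' (say, located within $[-10\mu, 10\mu]$), and observe that $\Phi(0) \geq n/2$.

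The heart of the argument is to show that this central majority is never captured by the far-away winner until $w$ has already been eliminated. By locality, $\Phi$ can decrease only once \emph{every} active candidate in the central zone has been eliminated, because until then each elimination transfers a central voter's support to another central candidate. But just before the last central candidate $c_0$ disappears, its Voronoi interval is flanked only by distant candidates and hence captures a large block of the central majority, whereas $w$'s interval is bounded on the left by the midpoint $(c_0 + w)/2 \geq 2\mu$ and therefore contains none of the central voters. Consequently $w$ has strictly smaller support than $c_0$ (indeed than the running champion among the central candidates), so it is $w$ --- not $c_0$ --- that would be eliminated first, contradicting the assumption that $w$ survives to the end. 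This forces $\dist(a^*, w) \leq 14\mu$ and completes the proof.

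I expect the main obstacle to be the book-keeping behind vote-splitting: a priori the central majority may be thinly divided among many candidates, so that a central candidate is eliminated while holding only a small share, and one must verify that its mass is inherited by a neighbouring \emph{central} candidate (so that $\Phi$ is preserved) rather than leaking outward toward $w$. Making the zone thresholds and the comparison between the support of $w$ and that of $c_0$ simultaneously consistent --- possibly also invoking the optimality of $a^*$ to guarantee that the voter mass is roughly balanced around $0$ --- is where the constants must be tuned. It is precisely the one-dimensional locality of vote transfer that makes this accounting possible and yields an $O(1)$ bound; this locality degrades in higher dimensions, where support can spread across the many directions of a ball, which is exactly the phenomenon that the doubling-dimension analysis of the subsequent sections is designed to control.
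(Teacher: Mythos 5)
Your reduction is sound and matches the paper's in substance: the Markov bound putting at least half the voters in $\mathcal{B}(a^*,2\mu)$ together with $\socialcost(w)\leq\socialcost(a^*)+n\,\dist(a^*,w)$ is exactly the role played by \Cref{lemma:wave}, and the one-dimensional locality of vote transfer is indeed the right phenomenon to exploit. The gap is in the core of the argument, the invariance of $\Phi$. The claim that ``each elimination transfers a central voter's support to another central candidate'' until the central zone is exhausted is false: a voter at $1.99\mu$ whose candidate is eliminated will prefer a non-central candidate at $10.5\mu$ (distance about $8.5\mu$) over a surviving central candidate at $-10\mu$ (distance about $12\mu$). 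No recalibration of a \emph{single} zone threshold can repair this, because the voters occupy an interval of positive width: for any zone $[-z,z]$ the worst-case distance from a central voter to an in-zone candidate ($2\mu+z$) exceeds the best-case distance to a just-out-of-zone candidate ($z-2\mu$). You flag this yourself as ``the main obstacle,'' but it is not book-keeping to be tuned --- it is the step where the proof actually lives, and your potential function cannot carry it. The paper avoids the problem by never asserting that the central mass stays put; instead it uses nested shells $\mathcal{B}_1\subset\cdots\subset\mathcal{B}_4$ of radii $r,3r,5r,7r$, shows that when the last candidate of $\mathcal{B}_i$ dies there must be an active candidate in $\mathcal{B}_{i+1}\setminus\mathcal{B}_i$ (otherwise that last candidate would hold a strict majority and never be eliminated), and then only analyzes the single moment the last $\mathcal{B}_1$ candidate disappears: a pigeonhole over the at most two adjacent receivers yields a candidate with at least $n/4$ support, and a counting argument (that candidate's block, the disjoint $\geq n/4$ supporters of some $y\in\mathcal{B}_4\setminus\mathcal{B}_3$, and the disjoint $\geq n/4$ supporters of $w$ exceed $n$) closes the contradiction.

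A secondary error: ``$w$ has strictly smaller support than $c_0$, so it is $w$ that would be eliminated first'' misapplies the STV rule, which eliminates the candidate with the \emph{least} support overall, not any candidate trailing $c_0$. The salvageable version of your endgame is that a candidate holding a strict majority can never be eliminated and therefore wins --- which is exactly how the paper derives its contradictions --- but reaching the point where $c_0$ provably holds that majority is precisely what the broken $\Phi$-invariance was supposed to deliver.
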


Before we proceed with the proof of this theorem a few remarks are in order. First of all, we did not pursue optimizing the constant in the theorem, although this might be an interesting avenue for future research. It should also be noted that \Cref{theorem:STV-line} already implies a stark separation between $\stv$ and $\textsc{Plurality}$, as the latter is known to admit a one-dimensional $\Omega(m)$ lower bound (recall \Cref{tab:dimensions}).

\begin{proof}[Proof of \Cref{theorem:STV-line}]
Let $w \in C$ be the winner of $\stv$ under some (fixed) sequence of eliminations, and $x \in C$ be the candidate who minimizes the social cost. In the sequel it will be assumed that $\dist(x, w) > 0$; in the contrary case the theorem follows trivially. Moreover, we let $r := d(x, w)/7$, and we consider a sequence of balls $\{ \mathcal{B}_i \}_{i=1}^4$ so that every ball $\mathcal{B}_i$ has center at $x$ and radius $(2i - 1) \times r$, for $i=1,2,3,4$. We will show that at most half of the voters could reside in $\mathcal{B}_1$.

For the sake of contradiction, let us assume that at least a $\gamma > 1/2$ fraction of the voters are in $\mathcal{B}_1$; that is, $ \sum_{i \in V} \mathbbm{1} \{ i \in \mathcal{B}_1 \} = \gamma n > n/2$. First, we will argue that at the time the last candidate in $\mathcal{B}_i$ gets eliminated there is always a candidate located in $\mathcal{B}_{i+1} \setminus \mathcal{B}_i$. Indeed, in the contrary case we can deduce that the last candidate to be eliminated from $\mathcal{B}_i$ would receive the support of all the voters in $\mathcal{B}_1$, which is a contradiction since by construction all the candidates in $\mathcal{B}_4$ have to be eliminated (this follows given that $w \notin \mathcal{B}_4)$. Now consider the stage of $\stv$ just before the last candidate from $\mathcal{B}_1$ was eliminated; observe that this is well-defined as $\mathcal{B}_1$ initially contains at least one candidate, namely $x \in C$. Let us denote with $a^{\ell}, a^{r} \in C$ the leftmost and the rightmost (respectively) nearest \emph{active} candidates from $\mathcal{B}_1$---which are not in $\mathcal{B}_1$. We shall distinguish between two cases:

\textbf{Case I.} $ a^{\ell} \in \mathcal{B}_2 \setminus \mathcal{B}_1$ and $a^r \in \mathcal{B}_2 \setminus \mathcal{B}_1$. Following the elimination of the last candidate from $\mathcal{B}_1$ every voter in $\mathcal{B}_1$ will support either $a^{\ell}$ or $a^r$. Thus, (by the pigeonhole principle) we can conclude that one of these two candidates accumulates at least $n/4$ supporters in the round following the elimination of the last candidate from $\mathcal{B}_1$. Let us assume without any loss of generality that this candidate is $a^{\ell}$. It is important to point out that the support of a candidate can only grow throughout the execution of $\stv$, until elimination. Now consider the stage just before $a^{\ell}$ gets eliminated. We previously argued that there exists a candidate $y \in \mathcal{B}_4 \setminus \mathcal{B}_3$ that has to remain active after the elimination of $a^{\ell}$. This implies that $y$ received at least $n/4$ votes at this particular stage---just before the elimination of $a^{\ell}$. However, observe that none of the supporters of $y$ can derive from $\mathcal{B}_1$ since for every voter in $\mathcal{B}_1$ candidate $a^{\ell}$ is (strictly) superior to $y$. Moreover, the eventual winner $w$ should also have at least $n/4$ supporters at this stage in order to qualify for the next round, but the supporters of $w$ are certainly not from $\mathcal{B}_1$, and are also different from the supporters of $y$. This follows since both are active at this stage and $y \neq w$, implied by the fact that $d(x, y) < 7 r \leq d(x, w)$. As a result, we have concluded that the total number of voters is strictly more than $n$, which is an obvious contradiction.

\textbf{Case II.} Only one of the candidates $a^{\ell}$ and $a^r$ resides in $\mathcal{B}_2 \setminus \mathcal{B}_1$. Notice that our previous argument implies that at least one of the two should be in $\mathcal{B}_2 \setminus \mathcal{B}_1$, and hence, there is indeed no other case to consider. Let us assume without any loss of generality that the candidate in $\mathcal{B}_2 \setminus \mathcal{B}_1$ is $a^{\ell}$. Observe that just before the last candidate from $\mathcal{B}_1$ gets eliminated every voter in $\mathcal{B}_1$ either supports the candidate in $\mathcal{B}_1$ or candidate $a^{\ell}$. Moreover, we have argued that there are at least $5$ remaining candidates. Thus, we can infer that the last candidate from $\mathcal{B}_1$ received at most $20 \%$ of the votes at the round of elimination, in turn implying that $a^{\ell}$ enjoyed the support of at least $30\%$ of the voters. Consequently, we can easily reach a contradiction similarly to the previously considered case.

\begin{figure}[!ht]
    \centering
    \includegraphics[scale=0.46]{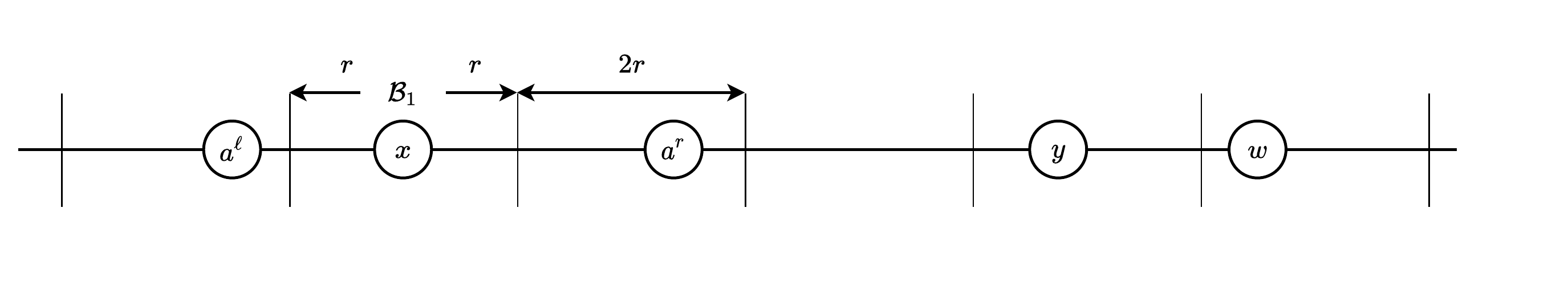}
    \caption{An illustration of our argument for \Cref{theorem:STV-line}.}
    \label{fig:STV}
\end{figure}

As a result, we have established that $\gamma \leq 1/2$, where recall that $\gamma$ represents the fraction of agents in $\mathcal{B}_1$, and the theorem follows by \Cref{lemma:wave}.
\end{proof}

\subsection{Main Result} Moving on to the main result of this section, we will prove the following theorem:

\begin{theorem}
    \label{theorem:STV-doubling}
    If $d$ is the doubling dimension of $\mathcal{M}$, then the distortion of $\stv$ is $O(d \log \log m)$.
\end{theorem}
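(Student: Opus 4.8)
The plan is to reduce the theorem to a statement about how tightly the voters can concentrate around the optimal candidate, and then to control that concentration by tracking the propagation of support through the elimination process, using the doubling structure to bound the number of ``directions'' in which support can escape. Let $w$ denote the $\stv$ winner under a fixed elimination sequence and let $x$ minimize the social cost, writing $D := \dist(x, w)$. By \Cref{lemma:wave} (applied with $a := x$ and $b := w$), it suffices to produce a radius $r = D/h$ with $h = O(d\log\log m)$ such that strictly fewer than, say, half of the voters lie in $\mathcal{B}(x, r)$; this immediately yields $\socialcost(w)/\socialcost(x) \le 1 + 2h = O(d\log\log m)$. Thus the whole burden is to show that a majority of voters cannot be packed into a ball of radius $o(D/(d\log\log m))$ around $x$ without contradicting the fact that $w$ survives.

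To argue this I would generalize the wave argument underlying \Cref{theorem:STV-line}. Assume for contradiction that at least half of the voters lie in $\mathcal{B}(x, r)$ for some small $r$. The basic invariant, exactly as on the line, is that whenever the last active candidate inside a ball centered at $x$ is eliminated, some active candidate must remain just outside it, since otherwise that candidate would inherit the entire inner majority and could never attain the minimum score. Following the support of the inner majority as candidates are removed therefore defines a ``wavefront'' that is forced to travel outward from $x$ past radius $D$. The decisive new ingredient over the line is \Cref{lemma:covering}: the active candidates immediately outside a ball of radius $\rho$ that are reachable as next preferences of the inner voters lie within a ball of radius $O(\rho)$, which the doubling property covers by only $\lambda^{O(1)} = 2^{O(d)}$ balls of radius $\rho/2$. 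Hence, when the wavefront crosses a shell, its support fragments among at most $2^{O(d)}$ essentially-distinct clusters, so by the pigeonhole principle a $2^{-O(d)}$ fraction stays concentrated on a single outgoing cluster that we track into the next shell. This is precisely the contrast sketched between the high- and low-dimensional regimes: the covering bound caps the branching of the support locally, whereas in a general metric the support can disperse unboundedly.

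The heart of the matter is then a bookkeeping recursion over the successive shells the wavefront must cross, in the spirit of the clean recursive structure of the general-metric bound (\Cref{theorem:STV-general_metrics}), but now localized scale-by-scale so that the doubling dimension, rather than the global candidate count $m$, governs each step. Concretely, I would maintain a frontier cluster together with the fraction of voters whose support it carries and the number of distinct candidates it has absorbed, and show these quantities obey a recursion in which the candidate count is forced to grow very rapidly across scales in order for the support to survive the per-shell $2^{-O(d)}$ dilution; since this count can never exceed $m$, the number of shells that can be crossed is bounded, which pins $D/r$ and closes the contradiction. The main obstacle I anticipate is exactly this last quantitative step: extracting a \emph{doubly} logarithmic dependence on $m$ (rather than the single $O(\log m)$ one would get by naively charging a $2^{O(d)}$ support loss to every shell) requires carefully coupling the growth of the frontier radius, the loss of support, and the covering-induced growth in the number of distinct candidates per scale, as well as choosing the shell geometry (arithmetic versus geometric) so that the worst case is the balanced one yielding the target bound $h = O(d\log\log m)$.
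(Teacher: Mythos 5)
Your reduction via \Cref{lemma:wave} and your instinct that the covering lemma must be used to limit how the support of the inner majority can disperse are both on target, and your shell/wavefront framing matches the general Skowron--Elkind skeleton that the paper also builds on. However, the core quantitative mechanism is missing, and you say so yourself: the step where a ``careful coupling'' of support loss, frontier radius, and candidate count is supposed to turn the naive $O(\log m)$ bound into $O(d\log\log m)$ is exactly the content of the theorem, and your proposal does not supply it. Worse, the specific mechanism you propose --- fragmenting the frontier into $2^{O(d)}$ clusters per shell and retaining a $2^{-O(d)}$ fraction by pigeonhole --- does not plausibly close: after $k$ shells you retain only a $2^{-O(dk)}$ fraction of the voters, and for that residual support to still force eliminations you would need the number of active candidates to be at least $2^{\Omega(dk)}$, which caps $k$ at $O(\log m / d)$ and yields a bound with the wrong shape entirely.

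The paper's resolution is a two-phase decomposition that your proposal does not contain. In Phase I one covers the \emph{inner} ball $\mathcal{B}(x,r)$ once and for all by $\mu \leq \lambda^{\log \mathcal{H}_m + 1}$ balls of radius $\epsilon r$ with $\epsilon := 1/\mathcal{H}_m$, discards (over the first $m - 6\mu$ rounds) the at most $\mu$ covering balls whose voters ever become the sole supporters of an eliminated candidate --- costing at most $\mu \cdot n/(6\mu) = n/6$ voters --- and shows that every surviving voter, upon each forced switch, has a companion within $\epsilon r$ supporting a different candidate. Feeding this into the recursion of \Cref{theorem:STV-general_metrics} decelerates the per-round drift by the factor $\epsilon$, so the harmonic sum collapses to $8\epsilon r \mathcal{H}_m = 8r$: after $m - 6\mu$ rounds a majority still supports candidates within $O(r)$ of $x$. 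Only then does Phase II run the concentric-shell argument, and because only $M = 6\mu = \lambda^{O(\log\log m)}$ candidates remain, the candidate-count-halving recursion ($m_{j+1} \leq m_j(1-\gamma)/\gamma < m_j/2$, driven by comparing plurality scores inside and outside each shell, not by diluting a tracked support fraction) terminates after $\lceil \log_2 M\rceil + 1 = O(d\log\log m)$ shells. The $\log\log m$ factor thus comes from $\log(\lambda^{\log\mathcal{H}_m}) = (\log\lambda)(\log\mathcal{H}_m)$, i.e., from the size of the covering at scale $r/\mathcal{H}_m$ --- not from any per-shell balancing of the kind you sketch. You would need to discover this separation into a drift-control phase and a short terminal wave phase to complete the argument.
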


Before we proceed with the proof of this theorem, let us first present an analysis for general metric spaces. Our argument will uncover the same bound $O(\log m)$ in terms of distortion, as in \cite{DBLP:conf/aaai/SkowronE17}, but it is considerably simpler, and it will also be used in the proof of \Cref{theorem:STV-doubling}.

\begin{theorem}
    \label{theorem:STV-general_metrics}
    The distortion of $\stv$ is $O(\log m)$.
\end{theorem}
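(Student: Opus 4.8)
The plan is to reduce the claim to a single application of \Cref{lemma:wave} and then to control, through a recursive ``wave'' argument, how far the winner can lie from the optimum. Fix a sequence of eliminations under which the $\stv$ winner $w$ survives (such a sequence exists by the parallel-universe model), and let $x$ be a social-cost minimizer; we may assume $\dist(x,w)>0$. Let $\rho$ be the smallest radius for which the ball $\mathcal{B}(x,\rho)$ captures a strict majority of the voters, and write $S$ for this majority. Since every voter outside $\mathcal{B}(x,\rho)$ contributes at least $\rho$ to $\socialcost(x)$, we have $\socialcost(x)\ge (n/2)\rho$, so by \Cref{lemma:wave} (with $a:=x$, $b:=w$, $\gamma:=1/2$) it suffices to prove that $\dist(x,w)=O(\rho\log m)$. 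In other words, the entire argument boils down to showing that the winner cannot be more than $O(\log m)$ ``median radii'' away from the optimum.

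First I would isolate the propagation structure of $\stv$ relative to $x$. Consider nested balls $\mathcal{B}(x,\rho),\mathcal{B}(x,3\rho),\mathcal{B}(x,5\rho),\dots$ whose radii grow additively by $2\rho$. Because every voter in $S$ lies within $\rho$ of $x$, its preference order agrees with the order of distances from $x$ up to an additive slack of $\rho$; consequently, as long as some candidate inside a given ball remains active, no voter of $S$ will support a candidate lying two shells further out. This yields a monotonicity property: the ``frontier'' of the support contributed by $S$ advances outward one shell at a time, and it can cross a given ball only after every candidate inside that ball has been eliminated. A second, crucial observation is that the majority support of $S$ can never be concentrated on a single active candidate $c\ne w$: such a $c$ would hold a strict majority, could therefore never attain the minimum score, and hence would never be eliminated, contradicting that $w$ is the winner. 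Thus at every stage prior to the frontier reaching $w$, the support of $S$ is split among at least two active candidates located in the current frontier shells.

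The heart of the proof is a recursion that bounds the number of shells the frontier can traverse. The plan is to partition the execution into phases, where a phase eliminates a constant fraction of the currently relevant candidates, and to argue that each phase advances the frontier by only a bounded additive amount while forcing the candidate pool to shrink by a constant factor. Since the pool starts at $m$ and each phase removes a constant fraction, there are only $O(\log m)$ phases; charging the bounded frontier advance of each phase and summing the telescoping contributions yields $\dist(x,w)=O(\rho\log m)$, which is exactly the estimate fed into \Cref{lemma:wave}. This is the ``clean recursive structure'' alluded to above, and it is precisely the skeleton refined in \Cref{theorem:STV-doubling}: there the covering lemma (\Cref{lemma:covering}) is used to bound the number of candidates an $O(\rho)$-shell can contain in terms of the doubling dimension, which sharpens the per-phase accounting and ultimately replaces the $\log m$ factor by $d\log\log m$.

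The step I expect to be the main obstacle is establishing the recursion itself, namely that advancing the support frontier outward necessarily consumes a constant fraction of the candidate pool (so that only $O(\log m)$ phases occur) rather than a vanishing fraction. The danger is a ``long jump'' in which the frontier leaps over many shells after only a few eliminations; the monotonicity property together with the ``no single-candidate majority'' observation is meant to preclude this, but making the charging argument tight---so that the number of phases, and not merely the number of eliminations, is governed by $\log m$---is the delicate part. Two secondary points require care: handling the additive $\rho$-slack in the preference orders of the voters in $S$ when comparing candidates in adjacent shells, and fixing, once and for all, the elimination sequence that realizes $w$ as the winner so that all score comparisons are made along a single consistent run.
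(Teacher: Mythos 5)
Your route is genuinely different from the paper's, and as written it has a gap at exactly the step you flag. The paper does \emph{not} use a phase decomposition for this theorem: it sets $r:=\dist(x,w)/(4\mathcal{H}_m+2)$, assumes for contradiction that more than $n/2$ voters lie in $\mathcal{B}(x,r)$, and tracks the potential $\overline{D}^{(t)}$, the average distance from $x$ of the candidates currently supported by those voters. The $\log m$ enters through a harmonic sum: at round $t$ the eliminated candidate has the minimum score among $m-t+1$ active candidates, so at most $n/(m-t+1)$ voters recast, and each recast moves a voter to a candidate at most $2r$ farther from $x$ (since two voters of $\mathcal{B}(x,r)$ supporting $a$ and $b$ force $\dist(x,a)\le\dist(x,b)+2r$). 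Hence $\overline{D}^{(t)}\le\overline{D}^{(t-1)}+4r/(m-t+1)$, which telescopes to $O(r\,\mathcal{H}_m)$, and the majority never reaches $w$. Note that the per-round accounting is over \emph{eliminations}, not phases, so the ``long jump'' worry you raise simply does not arise.

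Your shell/pigeonhole plan is essentially Skowron--Elkind's original argument (and Phase~II of \Cref{theorem:STV-doubling}), and it can be completed, but you must supply the charging step you left open: while the majority's support is spread over $m_j$ active candidates inside the current shell, pigeonhole gives one of them at least $\gamma n/m_j$ supporters; while that candidate is active no majority voter supports anything two shells out; and at the round it is eliminated every surviving candidate has at least $\gamma n/m_j$ supporters, so the candidates beyond the next shell---supported only by the $(1-\gamma)n$ outside voters---number at most $m_j(1-\gamma)/\gamma$. This is where your setup breaks: with $\gamma$ only a \emph{strict} majority the shrink factor $(1-\gamma)/\gamma$ can be arbitrarily close to $1$, so the number of phases is not $O(\log m)$. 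You need $\gamma$ bounded away from $1/2$ (e.g.\ $2/3$): define $r:=\dist(x,w)/\Theta(\log m)$, show at most $2n/3$ of the voters lie in $\mathcal{B}(x,r)$, and apply \Cref{lemma:wave} with $\gamma=2/3$, rather than extracting a bare majority radius $\rho$. Two smaller corrections: the preference slack for voters within $\rho$ of $x$ is $2\rho$, not $\rho$; and your application of \Cref{lemma:wave} needs the ball to contain \emph{at most} $\gamma n$ voters, which your minimal-majority-radius $\rho$ only gives for radii strictly below $\rho$.
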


\begin{proof}
Let $w \in C$ be the winner of $\stv$ under some sequence of eliminations, and $x \in C$ be the candidate who minimizes the social cost. Moreover, let $r := \dist(x,w)/(4\mathcal{H}_m +2)$, where $\mathcal{H}_m$ denotes the $m$-th harmonic number. If $V_1$ represents the subset of voters in $\mathcal{B}(x, r)$ and $\gamma := |V_1|/n$, we will show that $\gamma \leq 1/2$. Then, our claim will follow from \Cref{lemma:wave}.

For the sake of contradiction, let us assume that $\gamma > 1/2$. We will establish that no voter in $V_1$ will support candidate $w$ at any stage of $\stv$, which is an obvious contradiction since $\gamma > 1/2$ and $w$ was assumed to be the winner. In particular, let $\overline{D}^{(t)}$ be defined as follows:

\begin{equation}
    \label{eq:average_D}
    \overline{D}^{(t)} := \frac{1}{\gamma n} \sum_{i \in V_1} \dist(x, \topp(i; t)),
\end{equation}
where $\topp(i; t)$ represents the most preferred (active) candidate for voter $i$ after round $t = 1, \dots, m-1$. The quantity $\overline{D}^{(0)}$ is also defined as in \Cref{eq:average_D}, assuming that $\topp(i; 0) := \topp(i)$. Thus, observe that the triangle inequality yields that

\begin{equation}
    \overline{D}^{(0)} \leq \frac{1}{\gamma n} \sum_{i \in V_1} ( \dist(x, i) + \dist(i, \topp(i)) < 2r.
\end{equation}

Moreover, we claim that if a voter $i$ supports a candidate $a$ at round $t$, then $\dist(x, a) \leq \overline{D}^{(t-1)} + 2r$. Indeed, we will show the following: If two voters $i, j$ in $\mathcal{B}(x, r)$ support two candidates $a, b$ respectively, then it follows that $\dist(x, a) \leq \dist(x, b) + 2r$. In particular, successive applications of the triangle inequality yield that $\dist(i, a) \leq \dist(i, b) \leq \dist(x, i) + \dist(x, b) \leq r + \dist(x, b)$, while $\dist(i, a) \geq \dist(x, a) - \dist(x, i) \geq \dist(x, a) - r$, in turn implying that 

\begin{equation}
    \label{eq:recast}
    \dist(x, a) \leq \dist(x, b) + 2r.
\end{equation}

By symmetry, it also follows that $\dist(x, b) \leq \dist(x, a) + 2r$. Next, we will (inductively) establish that the quantity $\overline{D}^{(m-2)}$ is strictly less than $4 \mathcal{H}_m \times r$. Indeed, first note that under the invariance $\overline{D}^{(t)} < 4 \mathcal{H}_m \times r$ the voters in $\mathcal{B}(x, r)$ support at least two distinct candidates; otherwise, the unique supported candidate $a$ would prevail since $\gamma > 1/2$, which in turn is a contradiction given that $\dist(x, w) > \overline{D}^{(t)} + 2r \implies a \neq w$. Next, observe that at round $t = 1, \dots, m-1$ at most $n/(m-t+1)$ agents alter their support. This follows since there are exactly $m-t+1$ candidates, while $\stv$ eliminates the one who enjoys the least amount of support. Moreover, all the agents who recast their support will end up coalescing with a candidate whose distance from $x$ increases by at most a $2r$ additive factor---compared to the previously supported alternative; this follows by the bound of \Cref{eq:recast}, and the fact that there is indeed another candidate supported by voters in $\mathcal{B}(x, r)$ in round $t$ (as we previously argued). As a result, we have shown the following recursive structure:

\begin{equation}
    \overline{D}^{(t)} \leq \overline{D}^{(t-1)} + \frac{1}{\gamma} \frac{2r}{m-t + 1} \leq \overline{D}^{(t-1)} + \frac{4r}{m - t + 1},
\end{equation}
for all $t = 1, \dots, m-2$. This verifies the assertion that $\overline{D}^{(m-2)} < 4 \mathcal{H}_m \times r$. Thus, in the ultimate round of $\stv$ more than half the voters support a candidate $a$ for which $\dist(a, x) < 2r + 4 r \times \mathcal{H}_m$, which is a contradiction since $\dist(x, w) = 2r \times (2\mathcal{H}_m + 1)$.
\end{proof}

Next, we provide the proof of \Cref{theorem:STV-doubling}. In particular, the main technical challenge of the analysis lies in maintaining the appropriate invariance during $\stv$. We address this with a simple trick, essentially identifying a subset of the domain with a sufficient degree of regularity. We should also note that the second part of the proof makes use of the technique devised by Skowron and Elkind \cite{DBLP:conf/aaai/SkowronE17}.

\begin{proof}[Proof of \Cref{theorem:STV-doubling}]
As before, let $w \in C$ be the winner of $\stv$ under some sequence of eliminations, and $x \in C$ be the candidate who minimizes the social cost. Moreover, let $r := \dist(x, w)/(4h + 7)$, where $h$ is defined as $h := 1 + \lceil \log_2 ( 6 \lambda^{\log \mathcal{H}_m + 1} ) \rceil = \Theta(d \log \log m)$. If $\gamma$ represents the fraction of the voters in $\mathcal{B}(x, r)$, we will establish that $\gamma \leq 2/3$.

For the sake of contradiction, let us assume that $\gamma > 2/3$. Our argument will characterize the propagation of the support of the voters in $\mathcal{B}(x, r)$. In particular, we proceed in the following two phases:

\textbf{Phase I.} Our high-level strategy is to essentially employ the argument in the proof of \Cref{theorem:STV-general_metrics}, but not for the entire set of voters in $\mathcal{B}(x, r)$. Instead, we will establish the existence of a set with a helpful invariance, which still contains most of the voters. More precisely, we first consider a covering $\{ \mathcal{B}(z_j, r_j) \}_{j=1}^{\mu}$ of the ball $\mathcal{B}(x, r)$, where the radius of every ball is at most $\epsilon \times r$ for some parameter $\epsilon \in (0,1)$. The balls that do not contain any voter may be discarded for the following argument. We let $\mathcal{S}^{(0)}$ be the union of these balls. We know from \Cref{lemma:covering} that $\mu = \mu(\epsilon; \lambda) \leq \lambda^{\log(1/\epsilon) + 1}$. For Phase I we assume that more than $M$ candidates remain active in $\stv$, where $M := 6 \mu$, while $\epsilon := 1/\mathcal{H}_m$.

Let us consider a round $t = 1, \dots, m - M$ of $\stv$. In particular, let $a \in C$ be the candidate who is eliminated at round $t$. Observe that if $a$ is not supported by any voter residing in $\mathcal{B}(x, r)$, the support of these agents remains invariant under round $t$. Thus, let us focus on the contrary case. Specifically, if there exists a ball in the covering which contains exclusively supporters of candidate $a$, we shall remove every such ball from the current covering, updating analogously the set $\mathcal{S}^{(t)}$. Given that we are at round $t$, we can infer that the number of such supporters is at most $n/(m - t + 1) < n/M$. Thus, since we can only remove $\mu$ balls from the initial covering, it follows that the set $\mathcal{S} := \mathcal{S}^{(t)}$ with $t = m - M$ contains strictly more than $2n/3 - n \mu/M = n/2$ voters.

Next, we will argue about the propagation of the support for the voters in $\mathcal{S}$ during the first $m - M$ rounds of $\stv$. By construction of the set $\mathcal{S}$, we have guaranteed the following invariance: Whenever a candidate $a$ supported by voters in $\mathcal{S}$ gets eliminated, every supporter of $a$ from $\mathcal{S}$ lies within a ball of radius at most $\epsilon$ with agents championing a different candidate. Now, let us define $\overline{D}^{(t)}$ as follows:

\begin{equation}
    \overline{D}^{(t)} := \frac{1}{\gamma' n} \sum_{i \in \mathcal{S}} \dist(i, \topp(i; t)),
\end{equation}
where $\gamma'$ represents the fraction of the voters residing in $\mathcal{S}$. Note that $\overline{D}^{(t)}$ is defined slightly differently than in the proof of \Cref{theorem:STV-general_metrics}. Consider two voters $i, j$ supporting two candidates $a, b$ respectively. We will show that $\dist(i, b) \leq \dist(i, a) + 2 \dist(i, j)$, and similarly, $\dist(j, a) \leq \dist(j, b) + 2 \dist(i, j)$. Indeed, successive applications of the triangle inequality imply that 

\begin{align*}
    \dist(i, b) &\leq \dist(i, j) + \dist(j, b) \\
    &\leq \dist(i, j) + \dist(j, a) \\
    &\leq \dist(i, j) + \dist(j, i) + \dist(i, a) \\
    &= \dist(i, a) + 2\dist(i, j).
\end{align*}

Thus, if the voters $i$ and $j$ happen to reside within a ball of radius at most $\epsilon$, we can infer that $\dist(i, b) \leq \dist(i, a) + 4 \epsilon$. As a result, by the recursive argument of \Cref{theorem:STV-general_metrics} we can conclude that 

\begin{equation}
    \overline{D}^{(t)} \leq \overline{D}^{(t-1)} + \frac{1}{\gamma'} \frac{4 \epsilon r}{m - t + 1} \leq \overline{D}^{(t-1)} + \frac{8 \epsilon r}{m - t +1},
\end{equation}
in turn implying that 

\begin{equation}
    \overline{D}^{(m - M)} \leq 8 (\epsilon r) \mathcal{H}_m.
\end{equation}

Consequently, we have essentially shown that the propagation of the support is ``decelerated'' by a factor of $\epsilon$. In particular, for $\epsilon = 1/\mathcal{H}_m$ this implies that during the first phase the agents in $\mathcal{S}$ support candidates within $O(1) \times r$ distance from candidate $x$.

\textbf{Phase II.} At the beginning of the second phase there are $M$ remaining candidates. Let us denote with $\mathcal{B}_j := \mathcal{B}(x, (2 j - 1) \times r)$. In this phase we will argue about the entire set of voters in $\mathcal{B}(x, r)$. Let $m_1 \leq M$ be the number of candidates supported by voters in $\mathcal{B}(x, r)$ at the beginning of the second phase. Our previous argument implies that every such candidate will reside in $\mathcal{B}_7$; this follows directly by applying the triangle inequality. Let us denote with $m_j$ the number of candidates residing outside $\mathcal{B}_{4 + 2j}$ for $j \geq 2$ at the round the last candidate from $\mathcal{B}_{3 + 2j}$ gets eliminated. 

By the pigeonhole principle, we can infer that there exists a candidate $a$ in $\mathcal{B}_7$ who enjoys the support of at least $\gamma n/m_1$ voters. Moreover, observe that the triangle inequality implies that no voter will support a candidate outside $\mathcal{B}_{8}$ as long as candidate $a$ remains active. Thus, at the round $a$ gets eliminated we can deduce that $ (1 - \gamma) n/m_2 \geq \gamma n/m_1 \iff m_2 \leq m_1 \times (1-\gamma)/\gamma$, where we used that the number of candidates in every subset can only decrease during $\stv$. Inductively, we can infer that

\begin{equation}
    m_h \leq \left( \frac{1 - \gamma}{\gamma} \right)^{h-1} m_1 < \left( \frac{1}{2} \right)^{h-1} M \leq 1,
\end{equation}
for $h = \lceil \log_2 M \rceil + 1$, where we used that $\gamma > 2/3$. This implies that the winner of $\stv$ should lie within $B_{4 + 2h}$, i.e. $\dist(x,w)/r < 4h + 7$, which is a contradiction since $\dist(x,w) = (4h + 7) \times r$. Thus, the theorem follows directly from \Cref{lemma:wave}.
\end{proof}

\subsection{The Lower Bound}

In this subsection we refine the $\Omega(\sqrt{\log m})$ lower bound of Skowron and Elkind \cite[Theorem 4]{DBLP:conf/aaai/SkowronE17} based on the doubling dimension of the submetric induced by the set of candidates $\mathcal{M}_C$. In particular, we will establish the following theorem:

\begin{theorem}[Lower Bound for $\stv$]
    \label{theorem:lower_bound-STV}
For any $\lambda \geq 2$ there exists a metric space induced by the set of candidates $(\mathcal{M}_C, \dist)$, with $d = \Theta(\log \lambda)$ being the doubling dimension of $\mathcal{M}_C$, and a voting profile such that the distortion of $\stv$ is $\Omega(\sqrt{d})$.
\end{theorem}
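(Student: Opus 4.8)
The plan is to build on the lower-bound construction of Skowron and Elkind, which already exhibits, for every $m$, an election on $m$ candidates together with a consistent metric for which $\stv$ — under a suitable sequence of eliminations, as permitted by the parallel universe model — selects a winner $w$ whose social cost exceeds that of the optimal candidate $x$ by a factor $\Omega(\sqrt{\log m})$. I would re-present this construction in a parameterized form, organizing the candidates into $k = \Theta(\sqrt{\log m})$ successive ``levels'' so that the distortion it forces is $\Omega(k)$, while the total number of candidates satisfies $\log m = \Theta(k^2)$. The role of $\stv$'s greedy eliminations is to repeatedly discard the candidates closest to $x$, pushing the surviving support outward level by level until only a far candidate remains; I would verify (or quote from their analysis) that some admissible elimination order realizes this, so that $w$ is indeed a legitimate $\stv$ winner, and then read off the distortion factor via \Cref{lemma:wave}.

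The heart of the refinement is to control the doubling dimension of the candidate submetric $\mathcal{M}_C$ rather than merely to count candidates. First I would record the trivial upper bound $\dim(\mathcal{M}_C) \leq \log_2 |\mathcal{M}_C| = \log_2 m$. For the matching lower bound I would isolate within $\mathcal{M}_C$ a near-equidistant (``star'') submetric $\mathcal{X}$ on $\Theta(m)$ points — precisely the high-dimensional object already implicit in the construction of Skowron and Elkind (cf.\ the star graph of \Cref{fig:high_dimensionall}) — and compute its doubling constant directly: a ball of radius $2r$ engulfing all of $\mathcal{X}$ cannot be covered by fewer than $|\mathcal{X}|$ balls of radius $r$ once $r$ drops below the common interpoint distance, so $\dim(\mathcal{X}) = \Omega(\log |\mathcal{X}|) = \Omega(\log m)$. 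Since the diameter-based formulation of the doubling dimension is monotone under passing to submetrics, $\dim(\mathcal{M}_C) \geq \dim(\mathcal{X}) = \Omega(\log m)$, and hence $d := \dim(\mathcal{M}_C) = \Theta(\log m)$.

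Combining the two ingredients yields the claim: the distortion forced by the construction is $\Omega(\sqrt{\log m}) = \Omega(\sqrt{d})$. To obtain the statement for an arbitrary prescribed $\lambda \geq 2$, I would instantiate the construction with $m$ chosen so that $\log m = \Theta(\log \lambda)$ (e.g.\ $m = \Theta(\lambda)$), which makes the doubling constant of $\mathcal{M}_C$ equal to $\Theta(\lambda)$ and therefore $d = \Theta(\log \lambda)$; the distortion is then $\Omega(\sqrt{d})$, as desired.

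I expect the main obstacle to be the doubling-dimension lower bound on $\mathcal{M}_C$: I must exhibit a genuinely near-uniform submetric of size $\Theta(m)$ inside the construction and argue that the remaining (hierarchical, non-uniform) structure does not spoil the covering lower bound. A secondary difficulty is to perform the modification of the Skowron--Elkind construction so that the two requirements — forcing distortion $\Omega(k)$ and maintaining $\dim(\mathcal{M}_C) = \Theta(\log m)$ — hold simultaneously, and so that the instance can be realized for \emph{every} value of $\lambda$ rather than only for a sparse family of sizes (a padding argument should handle the intermediate cases). By contrast, once the bad elimination order is fixed the distortion analysis itself follows the established template together with \Cref{lemma:wave}, and should amount to routine bookkeeping.
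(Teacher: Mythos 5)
Your proposal follows essentially the same route as the paper: a hierarchical (tree-plus-hub) refinement of the Skowron--Elkind construction whose admissible elimination order pushes the support outward over $h = \Theta(\sqrt{\log \lambda})$ levels, combined with the observation that the $\Theta(\lambda)$ pairwise-equidistant leaves force a doubling constant of at least $\lambda$ while the trivial bound $\dim(\mathcal{M}_C) \leq \log_2 |\mathcal{M}_C|$ matches it from above, so that $d = \Theta(\log \lambda)$ and the distortion is $\Omega(h) = \Omega(\sqrt{d})$. One small correction: \Cref{lemma:wave} only provides \emph{upper} bounds on cost ratios, so the final step must be a direct computation of $\socialcost(w)/\socialcost(x)$ --- the paper shows the voter mass per layer halves as one ascends the tree, giving $\socialcost(w) \geq n_0 h$ against $\socialcost(x) = O(n_0)$ --- rather than an application of that lemma.
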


For the proof, we consider first a tree $\mathcal{T}$ with $\lambda$ number of leaves; it will be assumed that $\lambda$ is such that $\lambda = a_i$ for some $i \in \mathbb{N}$, where $\{a_i\}_{i \in \mathbb{N}}$ is a sequence such that $a_1 = 2$ and $a_{i+1} = 2(a_i + 1)$ for $i \geq 1$. Notice that if this is not the case we can always select the maximal $\lambda'$ smaller than $\lambda$ that satisfies this property; given that $\lambda' = \Theta(\lambda)$ this would not affect the conclusion (up to constant factors). Then, the next (or first) \emph{layer} will consist of nodes which are parents of leaves, and in particular, every node in layer $1$ will be parent to exactly $2$ (mutually distinct) leaves, and we will say that the \emph{branching factor} is $b_1 := 2$. This construction is continued iteratively until we reach the root, with the branching factor of layer $i > 1$ satisfying $b_{i+1} = 2 (b_i + 1)$; the first two layers of this construction are illustrated in \Cref{fig:growing_tree}. Observe that by construction the branching factor increases exponentially fast. Moreover, the number of nodes in the $i$-th layer is $m_{i} = m_{i-1}/b_{i}$, with $m_0 := \lambda$. Now let $h$ be the height of the induced tree. We can infer that 

\begin{equation}
    m_{h} = \frac{m_{h-1}}{b_h} = \frac{\lambda}{\prod_{i=1}^h b_i} \geq \frac{\lambda}{\prod_{i=1}^h 4^i} = \frac{\lambda}{4^{\sum_{i=1}^h i}} = \frac{\lambda}{2^{h (h+1)}},
\end{equation}
where we used that $b_{i+1} \leq 4 b_i$. Thus, since $m_h = 1$, it follows that $h = \Omega(\sqrt{\log_2 \lambda})$. Finally, we incorporate a node which is connected via edges to all the leaves. Then, the distance between two nodes is defined as the length of the shortest path in the induced \emph{unweighted} graph. The metric space we introduced will be henceforth represented as $(\mathcal{M}_C, \dist)$.

\begin{claim}
    \label{claim:lower_bound-dimension}
The doubling dimension of the metric space $(\mathcal{M}_C, \dist)$ is $\Theta(\log \lambda)$.
\end{claim}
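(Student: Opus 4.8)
The plan is to establish matching bounds on $\dim(\mathcal{M}_C) = \log_2 \Lambda$, where $\Lambda$ denotes the doubling constant of $(\mathcal{M}_C, \dist)$, by exhibiting $\log_2 \lambda \leq \dim(\mathcal{M}_C) \leq \log_2 \lambda + O(1)$.

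For the upper bound I would simply invoke the general fact recorded earlier that $\dim(\mathcal{M}_C) \leq \log_2 |\mathcal{M}_C|$ for any finite metric space, so it suffices to show $|\mathcal{M}_C| = O(\lambda)$. The point set consists of the $\lambda$ leaves, the internal tree nodes, and the single extra node joined to all leaves. Since the number of nodes in layer $i$ satisfies $m_i = m_{i-1}/b_i$ with $m_0 = \lambda$ and every branching factor obeys $b_i \geq 2$, we obtain $m_i \leq \lambda\, 2^{-i}$, and hence $\sum_{i \geq 1} m_i \leq \lambda \sum_{i \geq 1} 2^{-i} = \lambda$. Consequently $|\mathcal{M}_C| \leq 2\lambda + 1 = O(\lambda)$, which yields $\dim(\mathcal{M}_C) = O(\log \lambda)$.

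For the lower bound the key observation is that the node adjoined to all leaves---call it $v_0$---induces a ``star'' that by itself forces a large doubling constant. Because the underlying graph is unweighted, every positive distance in $\mathcal{M}_C$ is an integer $\geq 1$; in particular each leaf is at distance exactly $1$ from $v_0$ (via its connecting edge), while any two distinct leaves are at distance at least $2$. Now consider the ball $\mathcal{B}(v_0, 2)$ with $r := 1$: it contains all $\lambda$ leaves. On the other hand, since the minimum positive distance is $1$, any ball $\mathcal{B}(s, 1)$ is the singleton $\{s\}$, so a radius-$1$ ball covers at most one leaf. Covering the $\lambda$ distinct leaves therefore requires at least $\lambda$ balls of radius $1$, which shows $\Lambda \geq \lambda$ and hence $\dim(\mathcal{M}_C) = \log_2 \Lambda \geq \log_2 \lambda = \Omega(\log \lambda)$. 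Combining the two bounds gives $\dim(\mathcal{M}_C) = \Theta(\log \lambda)$, as claimed.

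The argument is essentially calculation-free; the only point demanding care is the open-ball convention of \Cref{definition:doubling_dimension} together with the integrality of unweighted shortest-path distances, which is precisely what makes radius-$1$ balls singletons and turns the $v_0$-to-leaves star into a certificate of large doubling constant. (Alternatively, one could phrase the lower bound through submetric monotonicity applied to the uniform submetric on the $\lambda$ leaves, whose pairwise distances all equal $2$, but the direct covering argument via $v_0$ avoids invoking the diameter-based definition.)
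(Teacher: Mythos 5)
Your proof is correct and follows essentially the same route as the paper: the upper bound via $\dim(\mathcal{M}_C) \leq \log_2|\mathcal{M}_C|$ with $|\mathcal{M}_C| = O(\lambda)$ from the geometric decay of layer sizes, and the lower bound by certifying that the doubling constant is at least $\lambda$ using the star centered at the node adjacent to all leaves. The paper merely states the latter as ``easy to see''; your open-ball/integrality argument with $\mathcal{B}(v_0,2)$ versus singleton radius-$1$ balls is a clean and valid way to make it explicit.
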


\begin{proof}
It is easy to see that the doubling constant of the metric space $(\mathcal{M}_C, \dist)$ is at least $\lambda$. Thus, the claim follows since $|\mathcal{M}_C| \leq 2 \lambda$ and $\dim(\mathcal{M}_C) \leq \log_2 |\mathcal{M}_C|$.
\end{proof}

\begin{figure}[!ht]
    \centering
    \includegraphics[scale=0.51]{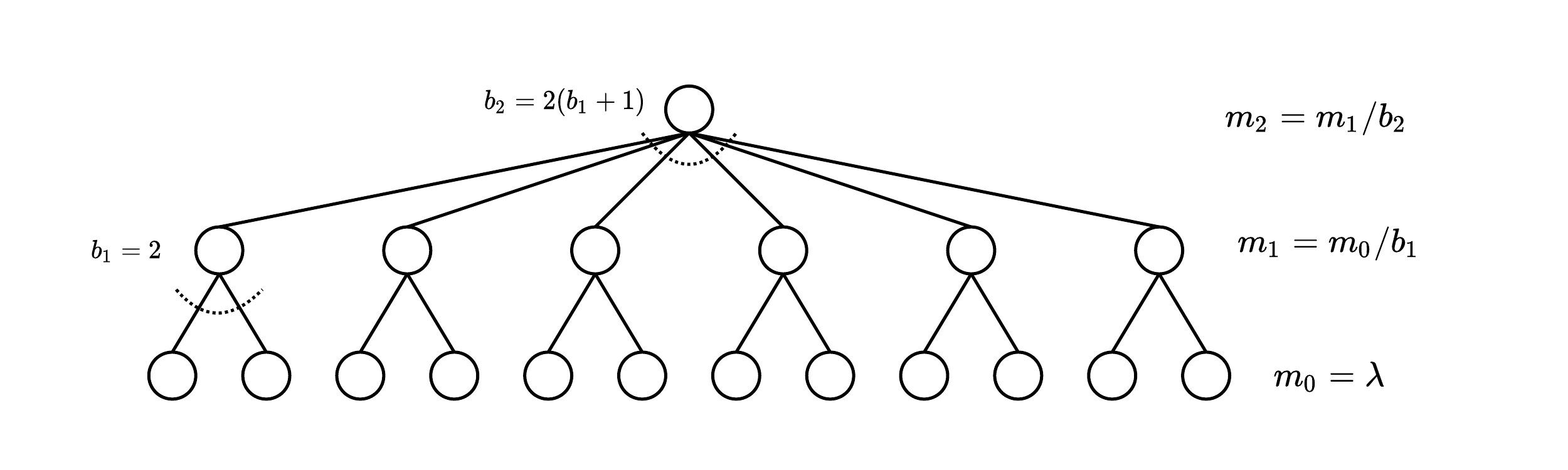}
    \caption{A $2$-layer instance of the tree $\mathcal{T}$ employed for the lower bound in \Cref{theorem:lower_bound-STV}.}
    \label{fig:growing_tree}
\end{figure}

\textit{The Voting Instance.} We assume that voters and candidates are mapped to points on the metric space $(\mathcal{M}_C, \dist)$. In particular, for every point $x \in \mathcal{T}$ we assign a (distinct) candidate, while every remaining candidate will be allocated to the point connected to all the leaves. In this way, $(\mathcal{M}_C, \dist)$ is indeed the metric space induced by the set of candidates. We will let $x \in C$ be a candidate located to the point connected to all the leaves, and $w \in C$ be the candidate at the root of $\mathcal{T}$. Moreover, for a layer $i$ of $\mathcal{T}$ we place $\nu_i$ number of voters at each point of the layer, such that $\nu_0 = 1$ and $\nu_{i+1} = (b_{i} + 1) \nu_i$ for all $i \geq 0$ (here we tacitly assume that $b_0 = 0$); no voters are collocated with candidate $x$.

\begin{claim}
    \label{claim:lower_bound-winner}
There exists an elimination sequence such that $w \in C$ is the winner of $\stv$.
\end{claim}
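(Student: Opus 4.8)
The plan is to exhibit an explicit sequence of eliminations that peels the tree $\mathcal{T}$ from the leaves upward, so that $w$---the candidate at the root---is the sole survivor. I would process the candidates in stages: in a preliminary stage eliminate $x$, and then in stage $i = 0, 1, \dots, h-1$ eliminate (one at a time) all candidates located at the nodes of layer $i$. Since the only candidate never eliminated is $w$ at layer $h$, it wins $\stv$ along this sequence, and by the parallel-universe model it suffices to show that each individual elimination is consistent with the ``least support'' rule.

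The first thing to pin down is the plurality support at the start of each stage, which requires understanding where each voter's support \emph{flows}. Because the underlying graph is unweighted and the only edges incident to $x$ join it to the leaves, the distance from any layer-$i$ node to $x$ is $i+1$, and every voter is collocated with a candidate (at distance $0$); hence $x$ attracts no support and is eliminated first. Eliminating $x$ at the outset is in fact crucial: were $x$ still active, a deep voter would often find $x$ strictly closer than its high ancestors, derailing the upward flow. Once $x$ and all of layers $0,\dots,i-1$ have been removed, a voter originally placed at a layer-$j$ node with $j<i$ has as its nearest \emph{active} candidate precisely its layer-$i$ ancestor: climbing the tree is the only remaining route, and any sideways detour into a sibling branch (reachable only by climbing one level higher and descending) is strictly longer. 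Consequently, writing $a_i$ for the support accumulated at a layer-$i$ node just before layer $i$ is eliminated, each of its $b_i$ children contributes $a_{i-1}$ while the node keeps its own $\nu_i$ voters, yielding $a_i = \nu_i + b_i a_{i-1}$ with base case $a_0 = \nu_0 = 1$.

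The heart of the argument is that the voter counts $\nu_{i+1} = (b_i+1)\nu_i$ were chosen exactly so that $a_i = \nu_{i+1}$ for every $i$: assuming $a_{i-1} = \nu_i$ gives $a_i = \nu_i + b_i \nu_i = (b_i+1)\nu_i = \nu_{i+1}$, and the base case holds since $\nu_1 = 1$. Therefore, at the beginning of stage $i$ every layer-$i$ node carries support exactly $\nu_{i+1}$, whereas every layer-$j$ node with $j>i$ still carries only its original $\nu_j$ voters (none of its descendants have flowed up to it yet). As $(\nu_i)$ is nondecreasing, the minimum support at this instant equals $\nu_{i+1}$ and is attained simultaneously by the layer-$i$ nodes and the layer-$(i+1)$ nodes; invoking the parallel-universe tie-breaking of Conitzer et al.\ I resolve this tie in favour of a layer-$i$ candidate. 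Eliminating the layer-$i$ nodes one by one only raises the support of their layer-$(i+1)$ parents (each gains $\nu_{i+1}$ per eliminated child), while the unprocessed layer-$i$ nodes stay at $\nu_{i+1}$; so a layer-$i$ candidate remains a minimizer throughout the stage. Once layer $i$ is exhausted, each layer-$(i+1)$ node has gathered $(b_{i+1}+1)\nu_{i+1} = \nu_{i+2} = a_{i+1}$, re-establishing the invariant for the next stage.

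I expect the main obstacle to be the bookkeeping of the previous paragraph: one must verify not merely the clean snapshot at the start of each stage but that \emph{every} intermediate elimination within a stage respects the rule. This rests on two facts I would isolate and check against the unweighted shortest-path distances---that support always migrates to the parent and never sideways to a sibling or to an already-higher layer, and that a partially processed layer-$(i+1)$ parent never drops below a remaining layer-$i$ node (immediate, since its count only grows from $\nu_{i+1}$). Granting these, the induction over the layers closes and $w$ indeed survives to the final round.
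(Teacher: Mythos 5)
Your proof is correct and follows essentially the same bottom-up induction that the paper's one-sentence proof sketches: remove the zero-support candidates at the hub node first, then peel the layers upward, with the choice $\nu_{i+1}=(b_i+1)\nu_i$ guaranteeing that layer $i$ stays (weakly) minimal until it is exhausted. The only nitpick is that your preliminary stage should eliminate \emph{all} candidates collocated with $x$, not just $x$ itself---the construction may place several candidates at that node, and leaving any of them active would tie with the parent for a leaf voter's support once its leaf is eliminated---but they all have zero support, so eliminating them first changes nothing.
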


\begin{proof}
Given that $\nu_{i+1} = (b_i + 1) \nu_i$ we can inductively infer that there exists an elimination sequence such that every candidate in layer $i$ is eliminated before any candidate in the layer $i+1$, while every candidate collocated with $x$ will be eliminated before any candidate in $\mathcal{T}$. This concludes the proof.
\end{proof}

\begin{claim}
    \label{claim:lower_bound-approx}
    $\socialcost(w)/\socialcost(x)  = \Omega(h)$.
\end{claim}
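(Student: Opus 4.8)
The plan is to compute both social costs exactly in terms of the per-layer voter masses, and to exploit the fact that the exponentially growing branching factor has been calibrated precisely so that this mass decays geometrically across layers. First I would pin down the two relevant distance profiles. Since the hub candidate $x$ is adjacent to every leaf, the shortest path from $x$ to any node $v$ in layer $i$ of $\mathcal{T}$ must begin with a hub-to-leaf edge and then climb $i$ edges up a descending branch, giving $\dist(x, v) = i + 1$; no route is shorter, since any path re-entering $x$ only adds length. For the root candidate $w$, the unique tree path from a layer-$i$ node to the root has length $h - i$, and any detour through $x$ is strictly longer, so $\dist(w, v) = h - i$.

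The key step is to show that the total number of voters placed across layer $i$, namely $m_i \nu_i$, equals $\lambda / 2^i$. Writing $m_i = \lambda / \prod_{j=1}^{i} b_j$ and $\nu_i = \prod_{j=0}^{i-1} (b_j + 1)$, and substituting the recurrence in the calibrated form $b_j + 1 = b_{j+1}/2$, the products telescope: all branching factors cancel except a single factor of $b_1 = 2$ and the accumulated halvings, leaving $m_i \nu_i = \lambda / 2^i$. This identity is the heart of the construction, as the growing branching factor is chosen exactly so that the voter mass halves from each layer to the next.

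With the voter masses in hand, both social costs reduce to weighted geometric sums. For the optimal candidate, $\socialcost(x) = \sum_{i=0}^{h} (m_i \nu_i)(i+1) = \lambda \sum_{i=0}^{h} (i+1) 2^{-i} = \Theta(\lambda)$, since $\sum_{i \geq 0} (i+1) 2^{-i}$ converges to a universal constant. For the $\stv$ winner, $\socialcost(w) = \sum_{i=0}^{h} (m_i \nu_i)(h-i) = \lambda \sum_{i=0}^{h} (h-i) 2^{-i} \geq \lambda h$, where the lower bound already follows from the single $i = 0$ term. Dividing the two estimates yields $\socialcost(w)/\socialcost(x) = \Omega(h)$, which is exactly the claim; combined with $h = \Omega(\sqrt{\log_2 \lambda})$ and $d = \Theta(\log \lambda)$ this later gives the advertised $\Omega(\sqrt{d})$ distortion.

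I expect the only genuine obstacle to be the telescoping identity $m_i \nu_i = \lambda / 2^i$: once it is established, the remaining steps are routine bounds on convergent geometric-type series. A secondary point deserving explicit care is verifying that no shortest path from a voter to either $x$ or $w$ takes a shortcut through the hub node, but this is immediate, since every such detour strictly increases the path length relative to the direct tree route.
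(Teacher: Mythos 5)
Your proof is correct and follows essentially the same route as the paper: the paper likewise shows that the voter mass per layer halves (via the recursion $n_{i+1} = \nu_{i+1} m_{i+1} = n_i (b_i+1)/b_{i+1} = n_i/2$, which is your telescoping identity $m_i\nu_i = \lambda/2^i$ in recursive form), then bounds $\socialcost(w) \geq n_0 h$ from the leaf layer and $\socialcost(x) \leq 4 n_0$ from the convergent series $\sum_{i\ge 0}(i+1)2^{-i}$. The only difference is that you make the distance computations and the absence of shortcuts through the hub explicit, which the paper leaves implicit.
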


\begin{proof}
Let $n_i$ be the total number of agents residing in the $i$-th layer of the tree $\mathcal{T}$. We will show that $n_{i+1} = n_i/2$, for all $i = 0, 1, \dots, h-1$. Indeed, for $i \geq 0$ it follows that $n_{i+1} = \nu_{i+1} m_{i+1} = (b_{i} + 1) \nu_i m_{i+1} = n_i (b_i + 1)/b_{i+1} = n_i/2$, where we used that $m_i = m_{i-1}/b_i$ and $b_{i+1} = 2 (b_i + 1)$. As a result, it follows that $\socialcost(w) \geq n_0 \times h$, while it is easy to show that $\socialcost(x) \leq 4 n_0$, concluding the proof.
\end{proof}

\begin{proof}[Proof of \Cref{theorem:lower_bound-STV}]
The lower bound follows by directly applying and combining \Cref{claim:lower_bound-dimension}, \Cref{claim:lower_bound-winner}, and \Cref{claim:lower_bound-approx}.
\end{proof}

Notice that this theorem implies as a special case the $\Omega(\sqrt{\log m})$ lower bound for general metrics, which only applies when the metric $(\mathcal{M}_C, \dist)$ is \emph{near-uniform}. 

\begin{remark}
It is not difficult to show that the distortion of $\stv$ is always $O(\Delta)$, where $\Delta$ represents the \emph{aspect ratio} of $\mathcal{M}_C$---the ratio between the largest (pairwise) distance to the smallest distance in $\mathcal{M}_C$. In fact, this bound is tight---up to constant factors---for certain instances, as implied by the construction in \Cref{theorem:lower_bound-STV}.
\end{remark}

\section{Coordination Dynamics}
\label{section:coordination}

In this section we explore the degree to which natural and distributed learning dynamics can converge to social choices with near-optimal distortion. We should point out that there is a concrete connection between such considerations and the results of the previous section, which will be revealed in detail very shortly. First, let us commence with the following observation: 

\begin{observation}
    \label{observation:large_coalitions}
Consider a voting instance under a metric space so that some candidate $a \in C$ has distortion at least $\mathfrak{D}$.\footnote{That is, $\socialcost(a)/\min_{x \in C} \socialcost(x) \geq \mathfrak{D}$.} Then, there exists a candidate $x \neq a$ and subset $W \subseteq V$ such that 
\begin{enumerate}
    \item Every agent in $W$ strictly prefers $x$ to $a$;
    \item $|W|/n \geq 1 - 2/(\mathfrak{D} + 1)$. 
\end{enumerate}
\end{observation}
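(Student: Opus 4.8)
The plan is to take $x$ to be a candidate of minimum social cost and to split the electorate according to whether a voter prefers $x$ or $a$. We may assume $\mathfrak{D} > 1$, since otherwise $1 - 2/(\mathfrak{D}+1) \le 0$ and the claim is vacuous. Let $x \in C$ minimize $\socialcost(\cdot)$; since the distortion of $a$ is at least $\mathfrak{D} > 1$, candidate $a$ does not minimize the social cost and hence $x \neq a$. Define $W := \{ i \in V : \dist(i, x) < \dist(i, a) \}$, the set of voters who strictly prefer $x$ to $a$, and let $\bar{W} := V \setminus W$. Membership is determined purely by the ranking $\sigma_i$, so property (1) holds by construction, and the whole task reduces to upper-bounding $|\bar{W}|$.

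First I would bound the social cost of $a$ from above, applying the triangle inequality voter-by-voter. For $i \in \bar{W}$ we have $\dist(i, a) \le \dist(i, x)$ directly from membership in $\bar{W}$, while for $i \in W$ the triangle inequality gives $\dist(i, a) \le \dist(i, x) + \dist(x, a)$. Summing over all voters yields $\socialcost(a) \le \socialcost(x) + |W| \cdot \dist(x, a)$. Next I would bound $\socialcost(x)$ from below using only the voters in $\bar{W}$: for each such voter the triangle inequality gives $\dist(x, a) \le \dist(x, i) + \dist(i, a) \le 2\,\dist(i, x)$, whence $\socialcost(x) \ge |\bar{W}| \cdot \dist(x, a)/2$.

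It then remains to combine these two estimates with the hypothesis $\socialcost(a) \ge \mathfrak{D} \cdot \socialcost(x)$. The first estimate rearranges to $(\mathfrak{D} - 1)\,\socialcost(x) \le |W| \cdot \dist(x, a)$, and substituting the lower bound on $\socialcost(x)$ from the second estimate eliminates both $\socialcost(x)$ and $\dist(x, a)$, leaving the purely combinatorial inequality $(\mathfrak{D} - 1)|\bar{W}| \le 2|W| = 2(n - |\bar{W}|)$. Rearranging gives $|\bar{W}| \le 2n/(\mathfrak{D}+1)$, i.e. $|W|/n \ge 1 - 2/(\mathfrak{D}+1)$, which is exactly property (2).

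The one step that requires care is orienting the two triangle-inequality arguments so that $\dist(x, a)$ appears linearly in both and cancels cleanly upon substitution; this is precisely what forces the lower bound on $\socialcost(x)$ to be extracted from the minority set $\bar{W}$ (where $\dist(i,x) \ge \dist(x,a)/2$ holds) rather than from the full electorate. The only degenerate case is $\socialcost(x) = 0$, i.e. every voter collocated with the optimal candidate; there one has $\dist(x,a) > 0$ (else the distortion of $a$ would be $1$), so every voter lies in $W$ and $|W|/n = 1$, which is disposed of separately and does not affect the main argument.
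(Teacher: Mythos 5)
Your proof is correct. The paper states \Cref{observation:large_coalitions} without any proof at all (it is used only to derive \Cref{proposition:core}), so there is no argument of the authors' to compare against; your write-up supplies the missing justification. The two-sided use of the triangle inequality is exactly right: bounding $\socialcost(a)$ from above by $\socialcost(x) + |W|\,\dist(x,a)$ and bounding $\socialcost(x)$ from below by $|\bar{W}|\,\dist(x,a)/2$ using only the voters of $\bar{W}$ (for whom $\dist(i,x) \geq \dist(x,a)/2$) is what makes $\dist(x,a)$ cancel and yields the denominator $\mathfrak{D}+1$. It is worth noting that the lazier route of invoking the paper's \Cref{lemma:wave} in contrapositive form (with $h=2$, since every voter in $\mathcal{B}(x,\dist(x,a)/2)$ strictly prefers $x$ to $a$) only gives $|W|/n \geq 1 - 2/(\mathfrak{D}-1)$, which is strictly weaker; so your direct argument is genuinely needed for the stated constant. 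The degenerate cases ($\mathfrak{D}\leq 1$, $\socialcost(x)=0$, and implicitly $\dist(x,a)=0$) are all handled or trivially dispatchable as you indicate. One cosmetic quibble: membership in $W$ as you define it is determined by the distances rather than by the ranking $\sigma_i$ (a voter with $\dist(i,x)=\dist(i,a)$ may rank $x$ above $a$ yet fall in $\bar{W}$), but this only makes $W$ smaller and property (1) is still guaranteed, so nothing is affected.
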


This statement essentially tells us that candidates with large distortion are inherently unstable, in the sense that there will exist a large ``coalition'' of voters that strictly prefer a different outcome. Interestingly, this observation implies a connection between (metric) distortion and the notion of \emph{core} in cooperative game theory. To be more precise, we will say that a set of coalitions $\mathcal{W}$ is $\alpha$-\emph{large}, with $\alpha \in [0, 1]$, if it contains every coalition $W \subseteq V$ such that $|W|/n \geq \alpha$; a candidate $a$ is said to be in the core if there does \emph{not} exist a coalition $W \in \mathcal{W}$ such that every agent in $W$ (strictly) prefers a different alternative.\footnote{Considering only ``large'' coalitions is standard in the literature; cf. \cite{10.5555/3033138}.} In this context, the following proposition follows directly from \Cref{observation:large_coalitions}:

\begin{proposition}
    \label{proposition:core}
Consider a voting instance under a metric space so that some candidate $a \in C$ has distortion at least $\mathfrak{D}$. Then, candidate $a$ cannot be in the core with respect to an $\alpha$-large set of coalitions, as long as $\alpha \leq 1 - 2/(\mathfrak{D} + 1)$.
\end{proposition}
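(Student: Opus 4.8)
The plan is to obtain the proposition as an essentially immediate consequence of \Cref{observation:large_coalitions}, which already encapsulates the metric content; what remains is purely a matter of unwinding the definitions of the core and of an $\alpha$-large coalition family. In particular, I would not attempt to re-derive any metric estimate, since the observation is stated earlier and may be invoked as a black box.

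Concretely, I would first apply \Cref{observation:large_coalitions} to the candidate $a$, which by assumption has distortion at least $\mathfrak{D}$. This produces a candidate $x \neq a$ together with a subset $W \subseteq V$ such that every agent in $W$ strictly prefers $x$ to $a$, and with the relative-size guarantee
\[
\frac{|W|}{n} \geq 1 - \frac{2}{\mathfrak{D}+1}.
\]
Next I would use the hypothesis $\alpha \leq 1 - 2/(\mathfrak{D}+1)$ to chain it with the bound above, yielding $|W|/n \geq \alpha$. Since $\mathcal{W}$ is $\alpha$-large, by definition it contains every coalition of relative size at least $\alpha$, and therefore $W \in \mathcal{W}$. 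Finally, the fact that all members of $W$ strictly prefer $x$ over $a$ exhibits $W$ as a blocking coalition belonging to $\mathcal{W}$; by the definition of the core this precludes $a$ from being in the core, which is exactly the claim.

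I do not anticipate a genuine obstacle here, since all of the substantive work---the triangle-inequality argument that manufactures a large set of voters preferring a cheaper alternative---is already carried out inside \Cref{observation:large_coalitions}. The only point requiring care is the bookkeeping with the inequality: one must confirm that the size bound furnished by the observation \emph{meets} the threshold $\alpha$ (rather than merely being of the same order), so that $W$ is genuinely among the coalitions that membership in the core is required to rule out. Thus the proposition is best viewed as a direct translation of the quantitative guarantee of \Cref{observation:large_coalitions} into the game-theoretic language of blocking coalitions and the core.
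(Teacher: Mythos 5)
Your proposal is correct and matches the paper exactly: the paper states that \Cref{proposition:core} ``follows directly from \Cref{observation:large_coalitions},'' and your unwinding of the definitions---invoking the observation to obtain $W$ with $|W|/n \geq 1 - 2/(\mathfrak{D}+1) \geq \alpha$, hence $W \in \mathcal{W}$, hence a blocking coalition---is precisely that direct derivation.
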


As a result, it is interesting to study the strategic behavior and the potential coordination dynamics that may arise in the face of an inefficient voting system. 

\subsection{The Model} We consider the following abstract model: For some given voting system, agents are called upon to cast their votes for a series of $T$ days or rounds, where $T$ is sufficiently large. After the end of each day, voters are informed about the results of the round, and the winner is determined based on the results of the \emph{ultimate} day. This is essentially an iterative implementation of a given voting rule, in place of the \emph{one-shot} execution typically considered, and it is introduced to take into account external information typically accumulated before the actual voting (e.g. through polls). For concreteness, we will assume that the voting rule employed in each day is simply the $\textsc{Plurality}$ mechanism, not least due to its popularity both in theory and in practice.

Before we describe and analyze natural dynamics in this model, let us first note that if all the voters engage truthfully throughout this game, the victor will coincide with the plurality winner, and as we know there are instances for which this candidate may have $\Omega(m)$ distortion. As a result, \Cref{observation:large_coalitions} implies that there will be a large coalition with a $1 - \Theta(1/m)$ fraction of the voters that strictly prefer a different outcome. Indeed, the lower bound of $\textsc{Plurality}$ is built upon $m-1$ clusters of voters formed arbitrarily close, while a different extreme party with roughly the same plurality score could eventually prevail. However, the access to additional information renders this scenario rather unrealistic given that we expect some type of adaptation or coordination mechanism from the agents. 

\subsection{A Greedy Approach} Let us denote with $n_a^{(t)}$ the plurality score of candidate $a$ at round $t \in [T]$. A particularly natural approach for an agent to engage in this scenario consists of maintaining a time-varying parameter $\theta^{(t)}$, which will essentially serve as the ``temperature''. Then, at some round $t > 1$ agent $i$ will support the candidate $b$ for which $b \succeq_i a$ for all $a, b \in C^{(t)}$, where $C^{(t)} := \{ a \in C : n_a^{(t-1)} \geq \theta^{(t)} \}$.\footnote{The definition of the set $C^{(t)}$ for $t > 1$ is subject to $|C^{(t)}| \geq |C^{(t-1)}| - 1$, i.e. agents never disregard more than $1$ candidate in the course of a single round; in the contrary case the guarantee we state for the dynamics does not hold due to some pathological instances.} That is, agents only consider candidates who exceeded some level of support during the previous day. Then, the temperature parameter is updated accordingly, for example with some small constant increment $\theta^{(t+1)} := \theta^{(t)} + \epsilon$, for some $\epsilon > 0$. In this context, observe that for a sufficiently small $\epsilon$ these dynamics will converge to an $\stv$ winner (based on the parallel universe model). This implies that the greedy tactics already offer an exponential improvement---in terms of the utilitarian efficiency---compared to the truthful dynamics. Nevertheless, the lower bound for $\stv$ (\Cref{theorem:lower_bound-STV}) suggests that we have to design a more careful adaptive rule in order to attain $O(1)$ distortion. 

\subsection{Exploration/Exploitation} The inefficiency of the previous approach---and subsequently of $\stv$---stems from the greedy nature of the iterative process: Agents may choose to dismiss candidates prematurely. For example, this becomes immediately apparent by inspecting the elimination pattern in the lower bound of \Cref{theorem:lower_bound-STV}. In light of this, the remedy we propose---and what arguably occurs in many practical scenarios---is an exploration phase. In particular, voters initially do not possess any information about the preferences of the rest of the population. Thus, they may attempt to explore several alternatives in order to evaluate the viability of each candidate; while doing so, agents will endeavor to somehow indicate or favor their own preferences. After the exploration phase, agents will leverage the information they have learnt to adapt their support. More concretely, we will consider the following dynamics:

\begin{enumerate}
    \item Exploration phase: In each round $t \in [m]$ every agent $i$ maintains a list $\mathcal{L}_i^{(t)}$, initialized as $\mathcal{L}_i^{(1)} := \emptyset$. If $C_i^{(t)} := C \setminus \mathcal{L}_i^{(t)}$, then at round $t$ an agent $i$ shall vote for the candidate $a \in C_i^{(t)}$ such that $a \succeq_i b$ for all $b \in C_i^{(t)}$. Then, agent $i$ updates her list accordingly: $\mathcal{L}_i^{(t+1)} := \mathcal{L}_i^{(t)} \cup \{a \}$;
    \item Exploitation phase: Every agent supports the first candidate\footnote{For simplicity, it is assumed that in case multiple such agents exist we posit some arbitrary but \emph{common} among all agents tie-breaking mechanism.} within her list that managed to accumulate---over all prior rounds---at least $n/2$ votes.
\end{enumerate}

In a sense, voters try to balance between voting for their most-preferred candidates and having an impact on the final result. We shall refer to this iterative process as $\coordination$ dynamics.

\begin{theorem}
    $\coordination$ dynamics lead to a candidate with distortion at most $11$.
\end{theorem}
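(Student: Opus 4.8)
The plan is to analyze the two phases of the $\coordination$ dynamics and show that the winning candidate is close to the optimal one, then invoke \Cref{lemma:wave} to convert proximity into a distortion bound. Let $x \in C$ be the candidate minimizing the social cost, and let $w$ be the candidate selected by the dynamics. The key structural fact to establish is that $w$ must lie within a bounded multiple of $r := \dist(x, w)/11$ of $x$, i.e., that a large fraction (say, more than half) of the voters reside in a small ball $\mathcal{B}(x, r)$ only if $w$ itself is close to $x$. So, mirroring the earlier proofs, I would set $r := \dist(x,w)/h$ for an appropriate constant $h$ and argue by contradiction that no more than some $\gamma$ fraction of voters can reside in $\mathcal{B}(x, r)$ while $w$ is the winner, which combined with \Cref{lemma:wave} yields the claimed constant.

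The crux is the exploitation phase: every agent supports the \emph{first} candidate in her exploration list (i.e., her highest-ranked candidate) that accumulated at least $n/2$ votes over the prior rounds. First I would argue that the optimal candidate $x$ (or some candidate close to it) necessarily clears the $n/2$ threshold during exploration. The natural approach is to consider the voters lying in a ball $\mathcal{B}(x, r)$ around $x$: during the $m$-round exploration phase, each such voter cycles through \emph{all} candidates in decreasing order of preference, so $x$ (and every other candidate) eventually receives the vote of every agent who ranks it. The critical observation is that for a voter $i \in \mathcal{B}(x, r)$, candidate $x$ is ranked ahead of any candidate $y$ with $\dist(x, y)$ large, by the triangle inequality; hence if more than $n/2$ voters lie near $x$, then $x$ garners more than $n/2$ votes at the round these voters reach it, clearing the threshold. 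One then shows that each voter's selected candidate in the exploitation phase is at least as preferred as $x$, hence not much farther from the voter than $x$ is, bounding the distance of $w$ from $x$.

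The main obstacle I anticipate is controlling which candidate a voter actually ends up supporting, since she picks the \emph{first} (most-preferred) threshold-clearing candidate, not $x$ itself. A voter may prefer some candidate $y \succ_i x$ that also cleared $n/2$ votes; I must bound how far such a $y$ can be from $x$. The right argument is that $y \succeq_i x$ for a voter $i$ near $x$ forces $\dist(i, y) \leq \dist(i, x) \leq r$, so $\dist(x, y) \leq 2r$ by the triangle inequality—thus any threshold-clearing candidate preferred by a near-$x$ voter is itself within $O(r)$ of $x$. Consequently, if more than half the voters settle on candidates within $O(r)$ of $x$, the final plurality winner $w$ must also be within $O(r)$ of $x$, contradicting $\dist(x,w) = h r$ for suitable $h$. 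The remaining technical care lies in verifying the threshold-clearing claim accounts for the monotone accumulation of votes across rounds (\emph{``over all prior rounds''}), and in pinning down the exact constant $11$; I would defer optimizing the constant and simply choose $h$ conservatively so that the contradiction goes through, then appeal to \Cref{lemma:wave} with the resulting $\gamma < 1$ to finish.
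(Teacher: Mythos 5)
Your skeleton (bound the fraction $\gamma$ of voters in a small ball $\mathcal{B}(x,r)$ around the optimum by contradiction, then invoke \Cref{lemma:wave}) is the same as the paper's, and two of your intermediate observations are sound: $x$ does accumulate at least $n/2$ cumulative votes during exploration, and any threshold-clearing candidate that a voter $i \in \mathcal{B}(x,r)$ weakly prefers to $x$ lies within $2r$ of $x$. The genuine gap is your concluding step: from ``more than half the voters settle on candidates within $O(r)$ of $x$'' you infer that ``the final plurality winner $w$ must also be within $O(r)$ of $x$.'' This does not follow. The winner is the candidate with the largest support on the \emph{ultimate} day, and the majority near $x$ may split its exploitation-phase votes among many distinct candidates inside $\mathcal{B}(x,O(r))$, each receiving only a small share, while $w$ keeps the entire $(1-\gamma)n$ block from outside the ball and still tops the count. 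This vote-splitting is exactly why plain $\textsc{Plurality}$ has an $\Omega(m)$ lower bound even with almost all voters adjacent to $x$, so the implication you need is the whole content of the theorem, not a consequence of proximity. Relatedly, your reading of ``first'' as ``most-preferred'' cannot be the intended one: by the end of the $m$ exploration rounds \emph{every} candidate has accumulated exactly $n$ votes, so under that reading the rule degenerates to plurality.

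What is missing is a \emph{concentration} argument producing a single candidate $a \neq w$ with at least $n/2$ votes on the ultimate day. The paper obtains it by taking $a$ to be the temporally first candidate to reach $n/2$ cumulative votes: since agents vote for distinct candidates across exploration rounds, at that moment at least $n/2$ agents already have $a$ in their lists, and the common tie-breaking forces all of them to select $a$ in the exploitation phase. Separately, with $r = \dist(x,w)/5$ and nested balls of radii $r, 3r, 5r$, one shows that before any voter of $\mathcal{B}_1$ can even list a candidate outside $\mathcal{B}_3$ (in particular $w$), every voter of $\mathcal{B}_1$ has already listed all candidates of $\mathcal{B}_1$, so $x$ clears the threshold strictly earlier and no $\mathcal{B}_1$ voter ends up supporting $w$; since $a$'s $\geq n/2$ supporters must intersect the majority in $\mathcal{B}_1$, we get $a \neq w$ while $w$ has fewer than $n/2$ votes, the desired contradiction, and \Cref{lemma:wave} with $h=5$, $\gamma \leq 1/2$ gives $11$. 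Choosing $h$ ``conservatively'' cannot substitute for this step, because the obstruction is combinatorial (splitting of the majority), not metric.
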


\begin{proof}
Let $w \in C$ be the winner under $\coordination$ dynamics, and $x \in C$ be the candidate who minimizes the social cost. If $r := \dist(x,w)/5$, we consider the sequence of balls $ \{ \mathcal{B}_i \}_{i=1}^3$ such that $\mathcal{B}_i := \mathcal{B}(x, (2i-1)r)$ for $i=1,2,3$. If $\gamma$ is the fraction of the voters in $\mathcal{B}_1$, we will argue that $\gamma \leq 1/2$.

For the sake of contradiction, let us assume that $\gamma > 1/2$. Let $t$ be the first round for which a voter in $\mathcal{B}_1$ supports a candidate outside $\mathcal{B}_3$. Then, it follows by the triangle inequality that the list of this voter just after round $t-1$ included all the candidates in $\mathcal{B}_2$. This in turn implies that by round $t-1$ every agent in $\mathcal{B}_1$ had already voted for all candidates in $\mathcal{B}_1$. Given that $\gamma > 1/2$, we can conclude that no agent from $\mathcal{B}_1$ voted for $w$ during the exploitation phase. 

Now let us consider the first round for which some candidate $a \in C$ accumulated at least $n/2$ votes, which clearly happens during the exploration phase. Then, at the exact same round at least $n/2$ agents have $a$ in their list; this follows since agents vote for different candidates during the exploration phase, and a candidate is always included in the list once voted for. As a result, our tie-breaking assumption implies that there will be a candidate with the support of at least $n/2$ agents during the exploitation phase. But our previous argument shows that this candidate cannot be $w$, which is an obvious contradiction. As a result, we have shown that $\gamma \leq 1/2$, and the theorem follows by \Cref{lemma:wave}.
\end{proof}

Before we conclude this section, let us briefly mention some intriguing open problems related to our results. Specifically, we have attempted to argue that candidates with small distortion may arise through natural learning rules. This was motivated in part by \Cref{observation:large_coalitions}, which implies the instability of outcomes with large distortion. However, the converse of this statement is not quite true: Although there always exists a candidate with distortion at most $3$ \cite{DBLP:conf/focs/Gkatzelis0020}, there might be a subset with at least half of the voters that strictly prefer a different outcome (a.k.a. Condorcet's paradox). Still, there might be an appropriate notion of stability which ensures that near-optimal candidates are in some sense stable. In spirit, this is very much pertinent to the main result of Gkatzelis et al. \cite{DBLP:conf/focs/Gkatzelis0020} concerning the existence of an \emph{undominated} candidate, leading to the following question:

\begin{question}
    Are there deterministic and distributed learning rules which converge to a candidate with distortion $3$?
\end{question}

\section{Intrinsic Robustness of Plurality Matching}
\label{section:robustness}

Gkatzelis et al. \cite{DBLP:conf/focs/Gkatzelis0020} introduced the (deterministic) $\pluralitymatching$ mechanism, and they showed that it always incurs distortion at most $3$ under metric preferences. Nonetheless, it is natural to ask how it performs in more refined, as well as in more general spaces. It should be noted that Gkatzelis et al. \cite{DBLP:conf/focs/Gkatzelis0020} established the robustness of $\pluralitymatching$ under different objective functions (measuring the social cost); namely, they showed that the same distortion bound can be achieved for the more stringent \emph{fairness ratio} of Goel et al. \cite{10.1145/3033274.3085138}. In this section we extend the robustness of $\pluralitymatching$ along two regimes.

\subsection{Ultra-Metrics} First, we study the power of $\pluralitymatching$ under \emph{ultra-metric spaces}; in particular, recall the following definition:

\begin{definition}
    \label{definition:ultra-metric}
An ultra-metric on a set $\mathcal{M}$ is a function $\dist : \mathcal{M} \times \mathcal{M} \mapsto \mathbb{R}$ such that $\forall x, y, z$,

\begin{enumerate}
    \item $\dist(x,y) = 0$ if and only if $x = y$ (identity of indiscernibles);
    \item $\dist(x,y) = \dist(y,x)$ (symmetry);
    \item $\dist(x, z) \leq \max \{\dist(x,y), \dist(y, z) \}$ (ultra-metric inequality).
\end{enumerate}
\end{definition}

Notice that these axioms also imply that $\dist(x,y) \geq 0, \forall x, y \in \mathcal{M}$. We will say that an ultra-metric space is an ordered pair $(\mathcal{M}, \dist)$ consisting of a set $\mathcal{M}$ along with an ultra-metric $\dist$ on $\mathcal{M}$. Naturally, every ultra-metric is also a metric since $\max \{ \dist(x,y), \dist(y,z) \} \leq \dist(x,y) + \dist(y,z)$, but the converse is not necessarily true. Perhaps the simplest conceivable ultra-metric is the \emph{discrete metric}, which is defined on a set $\mathcal{M}$ as follows:

$$
\dist(x,y)=
\begin{cases}
1 \quad \text{if $x \neq y$;} \\
0 \quad \text{if $x=y$},
\end{cases}
$$
where $x, y \in \mathcal{M}$. As we explained in our introduction, we study this setting mostly driven by the fundamental bottleneck variant in facility location games. Specifically, if the cost of a path corresponds to the maximum-weight edge in the path (posit non-negative weights), and the distance between two nodes in the graph is the minimum-cost path among all possible paths, it is well-known that these (so-called \emph{minimax}) distances satisfy the ultra-metric inequality of \Cref{definition:ultra-metric}. In \Cref{tab:ultra} we summarize some lower bounds for well-studied mechanisms; they mostly follow directly from the techniques of Anshelevich et al. \cite{DBLP:conf/aaai/AnshelevichBP15}, and thus we omit their proof. 

\begin{table}[!ht]
    \centering
    \begin{tabular}{||c c||} 
 \hline
 Mechanism & Lower Bound \\ [0.5ex] 
 \hline\hline
 Plurality \& Borda & $m$ \\ 
 \hline
 $k$-top & $\Omega(m/k)$ \\
 \hline
 Approval \& Veto & $n$ \\
 \hline
 Any Deterministic & $2$ \\
 \hline
\end{tabular}
\caption{Lower bounds for standard mechanisms under ultra-metric spaces. We use $k$-top to represent any \emph{deterministic} mechanism which elicits only the $k$-top preferences.}
    \label{tab:ultra}
\end{table}

In particular, we first prove a lower bound of $2$ for any deterministic mechanism under ultra-metrics:

\begin{proposition}
    \label{proposition:lower_bound-ultra_metric}
There exists a voting profile for which the distortion of any deterministic mechanism under an ultra-metric space is at least $2$.
\end{proposition}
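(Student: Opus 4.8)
The plan is to produce, for just two candidates $a,b$, a single ordinal profile together with a consistent ultra-metric witnessing distortion $2$, regardless of the mechanism's output. I would take $n$ even and split the electorate into two equal groups: $n/2$ voters reporting $a \succ b$ and $n/2$ reporting $b \succ a$. Since this profile is invariant under swapping the names of the two candidates (the two groups merely exchange roles), I may assume without loss of generality that the deterministic rule selects $a$; the case of $b$ is handled by the mirror image of the construction below. The point of the split is that it forces the rule into a choice that \emph{some} consistent metric renders suboptimal.

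Next I would describe the punishing ultra-metric. Introduce an auxiliary point $q$ and declare all three pairwise distances among $\{a,b,q\}$ equal to $1$; this is the (scaled) discrete metric on three points, and since every triple of distances reads $1,1,1$, the ultra-metric inequality $\dist(x,z) \le \max\{\dist(x,y),\dist(y,z)\}$ holds trivially. I place every voter of the first group at $q$ and every voter of the second group at $b$. A voter at $q$ is equidistant from $a$ and $b$, so breaking the tie toward $a$ makes the report $a \succ b$ consistent; a voter at $b$ has $\dist(\cdot,a)=1 > 0 = \dist(\cdot,b)$, consistent with $b \succ a$. Hence this ultra-metric is consistent with the profile, and it remains only to compute social costs:
\begin{align*}
\socialcost(a) &= \frac{n}{2}\,\dist(q,a) + \frac{n}{2}\,\dist(b,a) = \frac{n}{2} + \frac{n}{2} = n, \\
\socialcost(b) &= \frac{n}{2}\,\dist(q,b) + \frac{n}{2}\,\dist(b,b) = \frac{n}{2} + 0 = \frac{n}{2},
\end{align*}
so the chosen candidate $a$ has social cost twice that of $b$, giving distortion at least $2$.

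The one thing to get right --- and the reason the bound degrades from the metric value $3$ to $2$ --- is keeping the punishing distances consistent with the ultra-metric inequality. In the metric lower bound of $3$ one places the $a$-voters equidistant from $a$ and $b$ and then stretches $\dist(a,b)$ to \emph{twice} that common distance; the ultra-metric inequality forbids exactly this, forcing $\dist(a,b) \le \max\{\dist(q,a),\dist(q,b)\}$. Maximizing the ratio $(\dist(q,a)+\dist(a,b))/\dist(q,b)$ over ultra-metrics consistent with $a \succ b$ shows its supremum is $2$, attained by the uniform metric above. Thus this construction is optimal for the present approach and matches the $2$ claimed in the statement; the main obstacle is simply verifying that the saturating configuration stays inside the ultra-metric constraints rather than escaping to the factor-$3$ metric instance.
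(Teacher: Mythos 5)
Your construction is correct and is essentially the paper's own proof: the paper also splits the votes evenly between two candidates, assumes WLOG which one the rule picks, and places the chosen candidate's supporters at a third point of the discrete metric on three points (which it realizes as the minimax/bottleneck distance on an unweighted $3$-node path) while collocating the other candidate's supporters with their favorite, yielding the same $n$ versus $n/2$ social costs. Your closing observation that the ultra-metric inequality caps $\dist(a,b)$ at $\max\{\dist(q,a),\dist(q,b)\}$, which is exactly what blocks the factor-$3$ metric instance, is a correct and worthwhile aside not spelled out in the paper.
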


\begin{proof}
Consider a voting instance with two candidates $a, b$ and $n$ voters, such that the votes between the two candidates are split equally. Assume without any loss of generality that the mechanism eventually selects candidate $b$. We will present an ultra-metric space for which the social cost of $a$ is half than the social cost of $b$. Specifically, consider an unweighted path graph with $3$ nodes endowed with the minimax distance. We assume that candidate $a$ resides in the leftmost node of the graph along with all of the voters who supported $a$; on the other hand, candidate $b$ resides in the rightmost node of the graph, while all of her supporters lie in the intermediate node (see \Cref{fig:ultra_metric}). Then, it follows that $\socialcost(a) = n/2$ whereas $\socialcost(b) = n$, as desired.
\end{proof}

\begin{figure}[!ht]
    \centering
    \includegraphics[scale=0.5]{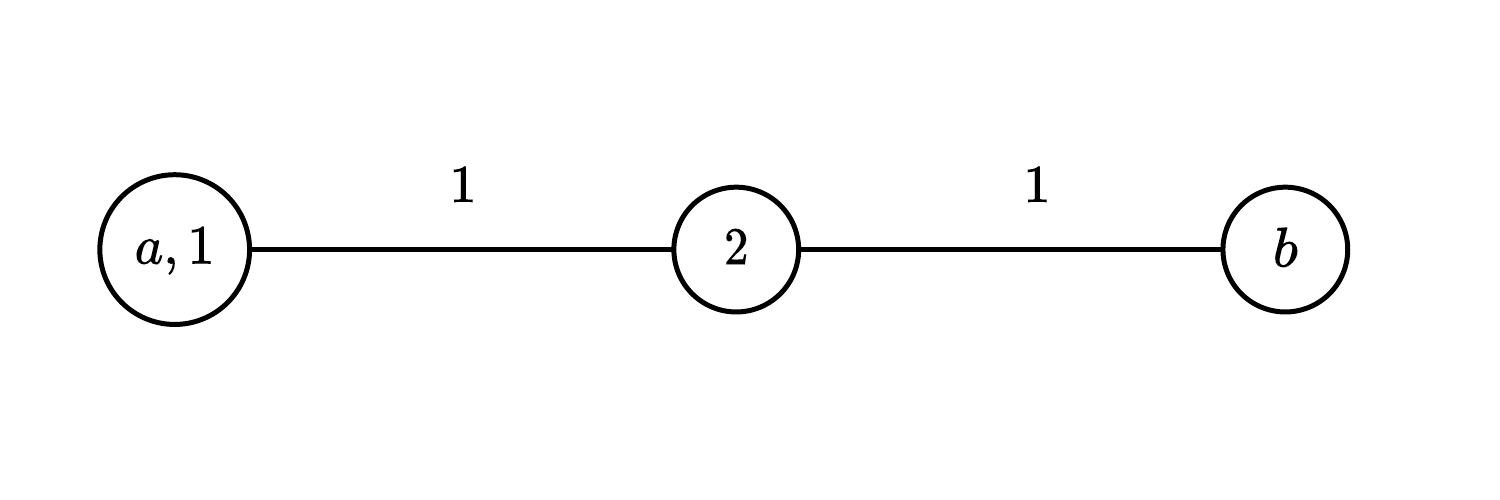}
    \caption{An example of our construction for the proof of \Cref{proposition:lower_bound-ultra_metric} for $n=2$ voters; the argument is similar to the one for metric spaces, but observe that in this case $\dist(1, b) = 1 \neq 2$ since we have considered minimax distances.}
    \label{fig:ultra_metric}
\end{figure}

Importantly, we will show that $\pluralitymatching$ always matches this lower bound. To keep the exposition reasonable self-contained we shall first recall some basic ingredients developed in \cite{DBLP:conf/focs/Gkatzelis0020}.

\begin{definition}[\cite{DBLP:conf/focs/Gkatzelis0020}, Definition $5$]
    For an election $\mathcal{E} = (V, C, \sigma)$ and a candidate $a \in C$, the integral domination graph of candidate $a$ is the bipartite graph $G^{\mathcal{E}}(a) = (V, V, E_a)$, where $(i, j) \in E_a$ if and only if $a \succeq_i \topp(j)$.
\end{definition}

\begin{proposition}[\cite{DBLP:conf/focs/Gkatzelis0020}, Corollary $1$]
    \label{proposition:perfect_matching}
    There exists a candidate $a \in C$ whose integral domination graph $G^{\mathcal{E}}(a)$ admits a perfect matching.
\end{proposition}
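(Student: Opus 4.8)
The plan is to invoke Hall's marriage theorem, exploiting the special combinatorial structure of the integral domination graph $G^{\mathcal{E}}(a) = (V, V, E_a)$. The crucial observation is that whether $(i,j) \in E_a$ depends on the right vertex $j$ only through its top choice $\topp(j)$: grouping the right vertices by their favourite candidate and writing $n_c$ for the plurality score of $c$, the neighbourhood of a left set $S \subseteq V$ is $N_a(S) = \{\, j \in V : \topp(j) \in D_a(S) \,\}$, where $D_a(S) := \{\, c \in C : \exists\, i \in S,\ a \succeq_i c \,\}$ is the set of candidates that some voter in $S$ ranks no higher than $a$. Hence $|N_a(S)| = \sum_{c \in D_a(S)} n_c$, and a perfect matching in $G^{\mathcal{E}}(a)$ exists if and only if $\sum_{c \in D_a(S)} n_c \geq |S|$ for every $S \subseteq V$. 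The task thus reduces to exhibiting a single candidate $a$ satisfying this family of inequalities.

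To locate such a candidate I would argue by contradiction through an extremal/monotonicity scheme. Suppose Hall's condition fails for every candidate $a$. Taking complements, for each $a$ there are sets $S, T \subseteq V$ with $|S| + |T| > n$ such that every voter in $S$ strictly prefers each candidate in $\{ \topp(j) : j \in T \}$ to $a$ (here $T = V \setminus N_a(S)$). Since $|S| + |T| > n$ the two sets overlap, so I can pick a voter $j^{*} \in S \cap T$ and pass to the candidate $a' := \topp(j^{*})$, which by construction satisfies $a' \succ_i a$ for all $i \in S$ and is in particular distinct from $a$. The idea is to attach to each candidate a potential -- most naturally the size $\nu(a)$ of a maximum matching in $G^{\mathcal{E}}(a)$, equivalently $n$ minus its K\"onig deficiency -- and to show that the passage $a \mapsto a'$ strictly increases this potential. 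Because there are finitely many candidates, a strictly improving chain cannot continue forever, contradicting the assumption that every candidate is deficient.

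The step I expect to be the main obstacle is exactly this monotonicity claim, i.e. that $\nu(a') > \nu(a)$. The naive attempt of reusing a maximum matching of $G^{\mathcal{E}}(a)$ fails, because although every edge of $G^{\mathcal{E}}(a)$ incident to $S$ survives in $G^{\mathcal{E}}(a')$ (as $a' \succ_i a$ only enlarges the down-sets of voters in $S$), edges incident to voters outside $S$ need not be preserved. To make the argument go through I would take $S$ to be the \emph{inclusion-maximal} deficiency-maximizing set, so that $N_a(S)$ inherits the clean structure of a K\"onig vertex cover, and then run an augmenting-path argument directly in $G^{\mathcal{E}}(a')$, using two facts: the vertex $j^{*}$ becomes adjacent to \emph{every} left vertex in $G^{\mathcal{E}}(a')$ (since $\topp(j^{*}) = a'$), and the neighbourhoods of the left vertices in $S$ can only grow. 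If controlling the edges outside $S$ proves too delicate, a cleaner fallback is to establish merely a \emph{fractional} perfect matching for some candidate -- e.g. via LP duality applied to the Hall inequalities above, or an averaging argument over candidates -- and then invoke the integrality of the bipartite perfect-matching polytope. An induction on the number of candidates $m$, removing the current plurality loser and lifting the matching, is yet another promising route. In all cases, making the termination/monotonicity step rigorous is the crux, and is precisely the content handled by Gkatzelis, Halpern, and Shah.
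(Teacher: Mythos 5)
First, note that the paper does not prove \Cref{proposition:perfect_matching} at all: it is imported verbatim from Gkatzelis, Halpern, and Shah \cite{DBLP:conf/focs/Gkatzelis0020} and used as a black box, so there is no in-paper argument to compare against; your proposal has to stand on its own. Its setup is correct and well observed: the fact that $(i,j)\in E_a$ depends on $j$ only through $\topp(j)$, the Hall reformulation $\sum_{c\in D_a(S)} n_c \ge |S|$, the passage to $T=V\setminus N_a(S)$ with $|S|+|T|>n$, and the extraction of $j^{\ast}\in S\cap T$ with $a':=\topp(j^{\ast})\succ_i a$ for all $i\in S$ are all valid and are the natural first moves.

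The genuine gap is the termination step, which you name but do not close. You need $\nu(a')>\nu(a)$ for the potential argument, yet nothing in the construction guarantees it: for left vertices outside $S$ the neighbourhood can shrink arbitrarily when passing from $G^{\mathcal{E}}(a)$ to $G^{\mathcal{E}}(a')$, the maximum deficiency of $G^{\mathcal{E}}(a')$ may be witnessed by a set unrelated to $S$, and the outcome also depends on which $j^{\ast}\in S\cap T$ is chosen. The proposed repairs are only named, not executed: the inclusion-maximal deficient set plus augmenting paths is exactly the delicate part, and the ``fractional fallback'' merely restates the same existence question in dual form (some candidate must admit no fractional vertex cover of size below $n$), since a bipartite graph has a perfect fractional matching if and only if it has an integral one. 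This missing step is not a routine verification one may defer --- the proposition is essentially the main technical theorem of \cite{DBLP:conf/focs/Gkatzelis0020} and is what resolved the metric distortion conjecture. If you want a short self-contained proof, a cleaner route is the later ``plurality veto'' argument of Kizilkaya and Kempe: give each candidate a budget equal to its plurality score, let the voters in arbitrary order each decrement the budget of their least-preferred candidate among those with positive remaining budget, and check that the last candidate whose budget reaches zero admits a perfect matching by pairing each decrement with the first-place vote it cancels. As written, your proposal correctly locates the crux but leaves it open.
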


We should also note that a candidate whose integral domination graph admits a perfect matching can be identified in strongly polynomial time. In particular, $\pluralitymatching$ always returns such a candidate. These ingredients suffice in order to establish the following:

\begin{theorem}
    \label{theorem:upper_bound-ultra_metric}
$\pluralitymatching$ returns a candidate with distortion at most $2$ under any ultra-metric space.
\end{theorem}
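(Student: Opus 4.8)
The plan is to combine \Cref{proposition:perfect_matching} with the strong (ultra-metric) triangle inequality, which is precisely what should let us shave the standard factor of $3$ down to $2$. Let $w \in C$ be the candidate returned by $\pluralitymatching$, and let $x \in C$ be any candidate minimizing the social cost; since the distortion is a supremum over all ultra-metrics consistent with the profile $\sigma$, it suffices to bound $\socialcost(w)/\socialcost(x)$ for an arbitrary such ultra-metric. By \Cref{proposition:perfect_matching} the integral domination graph $G^{\mathcal{E}}(w)$ admits a perfect matching, i.e.\ a bijection $\pi : V \to V$ with $(i, \pi(i)) \in E_w$ for every voter $i$; by the definition of $E_w$ this means $w \succeq_i \topp(\pi(i))$, and hence $\dist(i, w) \leq \dist(i, \topp(\pi(i)))$.

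The heart of the argument is to bound $\dist(i, \topp(\pi(i)))$ in terms of the distances to the optimal candidate $x$, using the ultra-metric inequality in place of the ordinary triangle inequality. Writing $j := \pi(i)$, I would first apply the ultra-metric inequality to obtain $\dist(i, \topp(j)) \leq \max\{\dist(i, j), \dist(j, \topp(j))\}$. Since $\topp(j)$ is $j$'s most preferred candidate we have $\dist(j, \topp(j)) \leq \dist(j, x)$, and a second application of the ultra-metric inequality gives $\dist(i, j) \leq \max\{\dist(i, x), \dist(j, x)\}$. Chaining these bounds yields
\[
  \dist(i, w) \leq \max\{\dist(i, x), \dist(j, x)\}.
\]

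To conclude, I would relax the maximum to a sum, using $\max\{\dist(i, x), \dist(\pi(i), x)\} \leq \dist(i, x) + \dist(\pi(i), x)$, and then sum over all voters. Because $\pi$ is a bijection, $\sum_{i \in V} \dist(\pi(i), x) = \sum_{j \in V} \dist(j, x) = \socialcost(x)$, so that $\socialcost(w) \leq 2\,\socialcost(x)$, which matches the lower bound of \Cref{proposition:lower_bound-ultra_metric}. The only conceptual step---and the reason the bound improves over the metric case---is that the ultra-metric inequality lets every triangle-inequality step collapse into a maximum, after which bounding a maximum of two nonnegative terms by their sum loses only a factor of $2$, rather than the factor of $3$ incurred in the metric analysis by two additive applications of the ordinary triangle inequality. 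I do not anticipate a genuine obstacle beyond this observation; the subtlety to double-check is simply that the matching condition supplies, for every voter, a distinct ``partner'' $\pi(i)$ whose top choice dominates $w$, so that the bijection can be used to turn the per-voter bounds into the clean global estimate above.
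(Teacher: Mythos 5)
Your proof is correct, and it takes a mildly but genuinely different route from the paper's. Both arguments start from the same key ingredient, the perfect matching $\pi$ in the integral domination graph guaranteed by \Cref{proposition:perfect_matching}, but the decompositions differ. The paper first applies the \emph{ordinary} triangle inequality through the benchmark candidate $b$, writing $\dist(i,\topp(\pi(i))) \leq \dist(i,b) + \dist(b,\topp(\pi(i)))$, re-indexes the second sum via the bijection to obtain $\sum_i \dist(b,\topp(i))$, and only then invokes the ultra-metric inequality (through voter $i$) together with $\topp(i) \succeq_i b$ to collapse each term to $\dist(i,b)$; the factor-$2$ saving thus appears at the level of the summed second term. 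You instead apply the ultra-metric inequality twice pointwise---once through the partner voter $\pi(i)$ and once through the optimal candidate $x$---to obtain the per-voter estimate $\dist(i,w) \leq \max\{\dist(i,x), \dist(\pi(i),x)\}$ before any summation, and only then relax the maximum to a sum and use the bijection. Your version buys something the paper's computation does not directly give: the pointwise max bound immediately implies $\max_{i} \dist(i,w) \leq \max_i \dist(i,x)$, i.e.\ distortion $1$ for the egalitarian (max-based) social cost under ultra-metrics, a fact the paper only recovers via the axiomatic machinery of \Cref{appendix:axiomatic}. The paper's version, on the other hand, is a surgical one-step modification of the distortion-$3$ proof of Gkatzelis et al., which makes it transparent exactly where the improvement from $3$ to $2$ comes from. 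All steps in your argument check out: $w \succeq_i \topp(\pi(i))$ gives $\dist(i,w) \leq \dist(i,\topp(\pi(i)))$, the top-choice property gives $\dist(\pi(i),\topp(\pi(i))) \leq \dist(\pi(i),x)$, and the bijectivity of $\pi$ turns $\sum_i \dist(\pi(i),x)$ into $\socialcost(x)$.
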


\begin{proof}[Proof of \Cref{theorem:upper_bound-ultra_metric}]
Let $a \in C$ be a candidate whose integral domination graph $G^{\mathcal{E}}(a)$ admits a perfect matching $M : V \mapsto V$ (recall \Cref{proposition:perfect_matching}), such that $a \succeq_i \topp(M(i))$ for all $i \in V$. Then, it follows that 
\begin{align*}
    \socialcost(a) &= \sum_{i \in V} \dist(i, a) \\
           &\leq \sum_{i \in V} \dist(i, \topp(M(i))) & (a \succeq_i \topp(M(i)), \forall i \in V) \\
           &\leq \sum_{i \in V} \dist(i, b) + \dist(b, \topp(M(i))) & (\text{triangle inequality}) \\
           &= \socialcost(b) + \sum_{i \in V} \dist(b, \topp(i))) & (\text{$M$ is a perfect matching}) \\
           &\leq \socialcost(b) + \sum_{i \in V} \max \{ \dist(i, b), \dist(i, \topp(i)) \} & (\text{ultra-metric inequality}) \\
           &= \socialcost(b) + \sum_{i \in V} \dist(i, b) & (\topp(i) \succeq_i b) \\
           &= 2 \socialcost(b).
\end{align*}
Given that the choice of $b$ was arbitrary the theorem follows.
\end{proof}

\begin{remark}
In \Cref{appendix:axiomatic} we also present an axiomatic analysis of $\pluralitymatching$ under a broad class of operators, which (among others) implies that if the social cost is determined by the most remote agent, then $\pluralitymatching$ recovers the optimal utilitarian welfare under ultra-metrics.
\end{remark}

\subsection{Approximate Metrics}

Next, we study the distortion of deterministic mechanisms when the distances \emph{approximately} satisfy the triangle inequality, as formalized in the following definition:

\begin{definition}
For some parameter $\rho \geq 1$, a $\rho$-approximate metric on a set $\mathcal{M}$ is a function $\dist : \mathcal{M} \times \mathcal{M} \mapsto \mathbb{R}$ such that $\forall x, y, z$,

\begin{enumerate}
    \item $\dist(x,y) = 0$ if and only if $x = y$ (identity of indiscernibles);
    \item $\dist(x,y) = \dist(y,x)$ (symmetry);
    \item $\dist(x, z) \leq \rho (\dist(x, y) + \dist(y, z))$ ($\rho$-relaxed triangle inequality).
\end{enumerate}
\end{definition}

Again we point out that these axioms\footnote{It should be noted that $\rho$-approximate metrics cannot be handled with the axiomatic approach given in \Cref{appendix:axiomatic} given that the operator $(x, y) \mapsto \rho ( x + y)$ is \emph{not} associative for $\rho > 1$.} imply that $\dist(x,y) \geq 0, \forall x, y \in \mathcal{M}$. We commence with the following lower bound:

\begin{proposition}
    \label{proposition:lower_bound-approximate}
There exists a voting profile for which the distortion of every deterministic mechanism under a $\rho$-approximate metric space is at least $\rho^2 + \rho + 1$.
\end{proposition}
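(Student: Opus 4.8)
The plan is to exhibit, for any deterministic rule $f$, a single fixed ordinal profile on which $f$ is forced to output a candidate whose social cost is (arbitrarily close to) $\rho^2+\rho+1$ times the optimum, for a suitable $\rho$-approximate metric consistent with that profile. I would use only \emph{two} candidates together with a perfectly symmetric profile: $n/2$ voters rank one candidate above the other and $n/2$ rank them in the opposite order. Since this profile is invariant under swapping the two candidates, $f$ must output one of them, and I may relabel so that the output is $b$ while the other candidate $a$ is the one I shall make optimal; the two voter groups are then identified with the supporters of $b$ and of $a$, respectively. Thus it suffices to construct a single $\rho$-approximate metric, consistent with ``$n/2$ prefer $a$, $n/2$ prefer $b$'', in which $\socialcost(b)\geq(\rho^2+\rho+1-o(1))\,\socialcost(a)$.

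For the metric I would place the $n/2$ voters preferring $b$ at a point $p$ and the $n/2$ preferring $a$ at a point $q$, and fix the six pairwise distances. The crucial idea is to place $q$ \emph{arbitrarily close to, but distinct from}, the optimal candidate $a$: set $\dist(q,a)=\epsilon$ for a small $\epsilon>0$, $\dist(p,a)=\dist(p,b)=1$, $\dist(a,b)=2\rho$, $\dist(p,q)=\rho$, and push $\dist(q,b)$ up to the largest value the $\rho$-relaxed inequality permits, namely $\dist(q,b)=\rho^2+\rho$. Separating $q$ from $a$ is essential: identity of indiscernibles would otherwise pin $\dist(q,b)=\dist(a,b)$, collapsing the gain to the elementary $2\rho+1$ bound, whereas keeping $q$ off $a$ frees $\dist(q,b)$ to be amplified through the relay $p$ via $\dist(q,b)\leq\rho(\dist(q,p)+\dist(p,b))=\rho(\rho+1)$, which is the source of the $\rho^2$ term. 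The social costs then read $\socialcost(a)=\frac{n}{2}(1+\epsilon)$ and $\socialcost(b)=\frac{n}{2}(\rho^2+\rho+1)$, so the ratio tends to $\rho^2+\rho+1$ as $\epsilon\to0^+$; since distortion is a supremum over consistent metrics, this already yields the claimed lower bound.

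The main obstacle, and the only real content, is verifying that these six distances form a genuine $\rho$-approximate metric, i.e. that each of the four triples $\{a,b,p\}$, $\{a,b,q\}$, $\{a,p,q\}$, $\{b,p,q\}$ satisfies the $\rho$-relaxed triangle inequality in all three orientations. I expect several of these to be tight in the limit $\epsilon\to 0$ — in particular $\dist(a,b)\leq\rho(\dist(a,p)+\dist(p,b))$, $\dist(p,q)\leq\rho(\dist(p,a)+\dist(a,q))$, and the amplifying inequality $\dist(q,b)\leq\rho(\dist(q,p)+\dist(p,b))$ — so the calculation must be done with care, while the remaining, slack inequalities reduce to checking $\rho^2+\rho\leq 2\rho^2$ and similar relations that hold for all $\rho\geq1$. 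I would also confirm ordinal consistency: $q$ strictly prefers $a$ since $\epsilon<\rho^2+\rho$, whereas $p$ is equidistant from $a$ and $b$ and its preference for $b$ is obtained by the arbitrary tie-breaking allowed in the model (alternatively one perturbs $\dist(p,b)$ and $\dist(a,b)$ infinitesimally to make the preference strict). As a sanity check, at $\rho=1$ the construction degenerates to a genuine metric and recovers the classical tight instance with distortion $3=\rho^2+\rho+1$.
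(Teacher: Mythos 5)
Your proposal is correct and is essentially the paper's own construction: two candidates with an evenly split, symmetric profile forcing the mechanism's hand, the $a$-supporters placed at distance $\epsilon$ from $a$, and the distance from that cluster to $b$ amplified to $\rho^2+\rho$ by relaying the $\rho$-relaxed triangle inequality through the $b$-supporters' location (the paper uses $\dist(p,q)=\rho+\rho\epsilon$ and $\dist(q,b)=\rho^2+\rho+\rho\epsilon$ rather than your $\rho$ and $\rho^2+\rho$, an immaterial difference). I checked that all twelve relaxed triangle inequalities hold for your distance assignment, so the argument goes through as you outline it.
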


\begin{proof}[Proof of \Cref{proposition:lower_bound-approximate}]
As usual, consider an instance with $2n$ voters and $2$ candidates $a, b \in C$, such that every candidate obtains exactly half of the votes. Let us assume without any loss of generality that the social choice rule selects candidate $b$. Now consider a $\rho$-approximate metric $\dist(\cdot, \cdot)$ on the set $\mathcal{M} = \{x, y, z, \omega\}$ defined as follows:

\[
\begin{blockarray}{cccccc}
x & y & z & \omega \\
\begin{block}{(cccc)cc}
  0 & 1 & 2\rho & \epsilon & & x \\
  1 & 0 & 1 & \rho + \rho \epsilon & & y \\
  2 \rho & 1 & 0 & \rho^2 + \rho + \rho \epsilon & & z \\
  \epsilon & \rho + \rho \epsilon & \rho^2 + \rho + \rho \epsilon & 0 & & \omega \\
\end{block}
\end{blockarray};
 \]
here we assume that $\epsilon \in (0,1)$. It is a simple exercise to verify that $\dist(\cdot, \cdot)$ indeed satisfies the axioms of a $\rho$-approximate metric. We also assume that $a := x$ and $b := z$; the $n$ supporters of candidate $a$ are located on $\omega$, while the $n$ supporters of candidate $b$ on $y$. Then, it follows that the agents' locations are consistent with their preferences, while the distortion of candidate $b$ reads 

\begin{equation}
    \frac{\socialcost(b)}{\socialcost(a)} = \frac{n d(z, \omega) + n d(y, z)}{n d(x, \omega) + n d(x, y)} = \frac{\rho^2 + \rho + \rho \epsilon + 1}{\epsilon + 1}.
\end{equation}
Taking the supremum of this ratio over $\epsilon \in (0,1)$ concludes the proof.
\end{proof}

To provide some intuition let us assume that $\dist$ corresponds to the squared Euclidean distance.\footnote{Note that Young's inequality implies that $\dist(\cdot, \cdot)$ is a $2$-approximate metric.} If we consider the usual voting scenario wherein the votes are splitted equally among two candidates (see the proof of \Cref{proposition:lower_bound-approximate} and \Cref{fig:squared_eucl}), then it follows that the distortion of candidate $b$ reads

\begin{equation}
    \frac{\socialcost(b)}{\socialcost(a)} = \frac{n \times 1^2 + n \times (2 + \delta)^2}{n \times 1^2 + n \times \delta^2} = \frac{\delta^2 + 4 \delta + 5}{\delta^2 + 1}.
\end{equation}

\begin{figure}[!ht]
    \centering
    \includegraphics[scale=0.6]{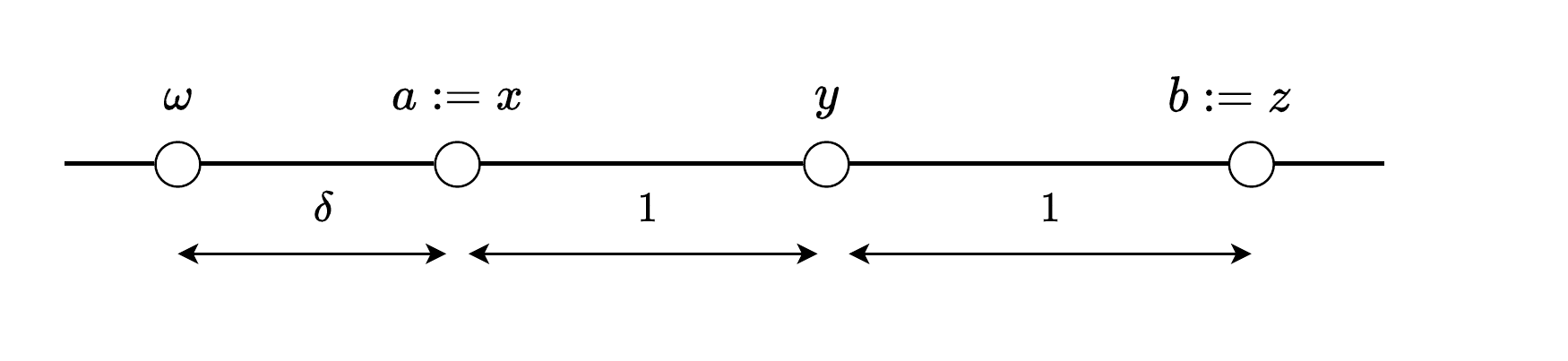}
    \caption{The lower bound for squared Euclidean distances.}
    \label{fig:squared_eucl}
\end{figure}

Interestingly, this ratio increases as $\delta$ goes from $0$ to a sufficiently small constant, implying a notable qualitative difference compared to the standard metric case. In particular, for the squared Euclidean distance it follows that the distortion is lower-bounded by $\sup_\delta \frac{\delta^2 + 4 \delta + 5}{\delta^2 + 1} = (4 + 2 \sqrt{2})/(4 - 2 \sqrt{2}) \cong 5.8284$. For general approximate metrics we can employ the techniques of Gkatzelis et al. \cite{DBLP:conf/focs/Gkatzelis0020} to show the following:

\begin{theorem}
    \label{theorem:upper_bound-approximate}
$\pluralitymatching$ returns a candidate with distortion at most $2\rho^2 + \rho$ under $\rho$-approximate metrics. 
\end{theorem}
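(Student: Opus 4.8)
The plan is to mirror the proof of the ultra-metric case (\Cref{theorem:upper_bound-ultra_metric}), replacing each exact triangle inequality with its $\rho$-relaxed counterpart and carefully tracking the accumulated $\rho$ factors. By \Cref{proposition:perfect_matching}, $\pluralitymatching$ returns a candidate $a$ whose integral domination graph $G^{\mathcal{E}}(a)$ admits a perfect matching $M : V \mapsto V$ satisfying $a \succeq_i \topp(M(i))$ for all $i \in V$. Fix an arbitrary competing candidate $b$; since $a \succeq_i \topp(M(i))$ means agent $i$ weakly prefers $a$ to $\topp(M(i))$, we get $\dist(i, a) \leq \dist(i, \topp(M(i)))$, so $\socialcost(a) \leq \sum_{i \in V} \dist(i, \topp(M(i)))$.

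The core step is to bound each term $\dist(i, \topp(M(i)))$ using the $\rho$-relaxed triangle inequality routed through $b$. First I would write $\dist(i, \topp(M(i))) \leq \rho \bigl( \dist(i, b) + \dist(b, \topp(M(i))) \bigr)$. The second summand needs a further relaxed-triangle step, $\dist(b, \topp(M(i))) \leq \rho \bigl( \dist(b, M(i)) + \dist(M(i), \topp(M(i))) \bigr)$, and then the key reindexing observation from the ultra-metric proof applies: as $i$ ranges over $V$, the matched vertex $M(i)$ ranges over all of $V$ (since $M$ is a perfect matching), so summing over $i$ lets me replace $\sum_i \dist(M(i), \topp(M(i)))$ by $\sum_j \dist(j, \topp(j))$, and likewise $\sum_i \dist(b, M(i))$ by $\sum_j \dist(j, b)$. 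Finally, since $\topp(j)$ is $j$'s most-preferred candidate we have $\topp(j) \succeq_j b$, i.e. $\dist(j, \topp(j)) \leq \dist(j, b)$, so every term $\dist(j, \topp(j))$ is itself dominated by $\dist(j, b)$.

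Assembling these bounds, $\socialcost(a)$ is dominated by a nonnegative combination of $\socialcost(b) = \sum_i \dist(i, b)$ terms, and collecting the coefficients yields a multiple of $\socialcost(b)$; the bookkeeping should produce the factor $2\rho^2 + \rho$. Concretely, the direct route gives one $\rho \cdot \socialcost(b)$ contribution from $\dist(i,b)$, a $\rho^2 \cdot \socialcost(b)$ contribution from $\dist(b, M(i))$ reindexed to $\dist(j,b)$, and another $\rho^2$-scaled contribution from $\dist(M(i), \topp(M(i)))$ which, after the reindexing and the $\topp(j) \succeq_j b$ domination, again becomes $\rho^2 \cdot \socialcost(b)$; one must verify the relaxed-triangle applications are arranged so these coefficients total exactly $2\rho^2 + \rho$ rather than something larger. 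The main obstacle is precisely this coefficient bookkeeping: the $\rho$-relaxed inequality is not associative, so the order in which one peels off triangle-inequality steps (and whether one routes $\dist(i,\topp(M(i)))$ through $b$ in one step or two) materially changes the constant, and I expect the tight constant $2\rho^2 + \rho$ emerges only for a specific choreography of these steps, so the work lies in choosing the decomposition that avoids an extra factor of $\rho$.
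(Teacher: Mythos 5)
Your proposal is correct and is exactly the argument the paper intends: the theorem's proof is omitted in the text, which defers to the technique of Gkatzelis et al., i.e., the perfect-matching chain already spelled out for \Cref{theorem:upper_bound-ultra_metric}. The coefficient bookkeeping you worry about at the end already closes in the decomposition you wrote down — summing $\dist(i,\topp(M(i))) \leq \rho\,\dist(i,b) + \rho^{2}\,\dist(b,M(i)) + \rho^{2}\,\dist(M(i),\topp(M(i)))$ over $i$, reindexing via the bijection $M$, and using $\dist(j,\topp(j)) \leq \dist(j,b)$ yields $(\rho + \rho^{2} + \rho^{2})\socialcost(b) = (2\rho^{2}+\rho)\socialcost(b)$, so no further choreography is needed.
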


As a result, this theorem leaves a gap between the upper bound derived for $\pluralitymatching$ and the lower bound of \Cref{proposition:lower_bound-approximate} when $\rho > 1$. Nonetheless, it should be noted that for $\rho$-approximate metrics there exists an \emph{instance-optimal} and computationally efficient mechanism based on linear programming (cf. \cite{10.1145/3033274.3085138}).

\section*{Acknowledgements} We are indebted to anonymous (AAAI `22) reviewers for providing many comments that helped improve the exposition of this paper. We are also grateful to Benny Moldovanu for pointing out related work on the subject of dimensionality in voting.

\bibliography{paper}
\bibliographystyle{aomplain}

\appendix

\section{An Axiomatic Extension}
\label{appendix:axiomatic}

Here we extend the analysis of Gkatzelis et al. \cite{DBLP:conf/focs/Gkatzelis0020} for a broad class of operators. In particular, we will make the following hypothesis:

\begin{assumption}
    \label{assumption:binary_operator}
    The binary operator $\oplus : \mathbb{R} \times\mathbb{R} \mapsto \mathbb{R}$ satisfies the following for all $x, y, z \in \mathbb{R}$:
    
    \begin{itemize}
        \item $x \oplus y = y \oplus x$ (commutative property);
        \item $(x \oplus y) \oplus z = x \oplus (y \oplus z)$ (associative property);
        \item $x \oplus y \leq x' \oplus y$ for all $x' \geq x$ (monotonicity).
    \end{itemize}
\end{assumption}

Canonical examples of this class of operators include the standard addition $+$ over the reals, as well as the $\max$ operator. We also introduce the following concept:

\begin{definition}
A metric w.r.t. a binary operator $\oplus$ on a set $\mathcal{M}$ is a function $\dist : \mathcal{M} \times \mathcal{M} \mapsto \mathbb{R}$ such that $\forall x, y, z$,

\begin{enumerate}
    \item $\dist(x,y) = 0$ if and only if $x = y$ (identity of indiscernibles);
    \item $\dist(x,y) = \dist(y,x)$ (symmetry);
    \item $\dist(x,z) \leq \dist(x, y) \oplus \dist(y, z)$ (metric inequality w.r.t. $\oplus$).
\end{enumerate}
\end{definition}

Moreover, the operator $\oplus$ will determine the social cost of a candidate; namely, with a slight abuse of notation we let 

\begin{equation}
\socialcost(a) = \sideset{}{}\bigoplus_{i \in V} d(i, a),
\end{equation}
for a candidate $a \in C$; naturally, it is assumed that voters and candidates have been embedded into a metric space w.r.t. the operator $\oplus$. 

\begin{theorem}
    \label{theorem:upper_bound-axiomatic}
$\pluralitymatching$ returns a candidate $a$ such that $\socialcost(a) \leq \socialcost(b) \oplus \socialcost(b) \oplus \socialcost(b), \forall b \in C$, under any metric space w.r.t. an operator $\oplus$ satisfying \Cref{assumption:binary_operator}. In particular, it follows that if $\oplus$ is idempotent then $\socialcost(a) \leq \socialcost(b)$ for all $b \in C$.
\end{theorem}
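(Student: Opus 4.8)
The plan is to follow the template of the proof of \Cref{theorem:upper_bound-ultra_metric}, replacing the standard addition with the abstract operator $\oplus$ and invoking the metric inequality w.r.t. $\oplus$ \emph{twice} rather than a single application of the ultra-metric inequality. First I would invoke \Cref{proposition:perfect_matching} to obtain a candidate $a \in C$ whose integral domination graph admits a perfect matching $M : V \mapsto V$, so that $a \succeq_i \topp(M(i))$ for every $i \in V$. Writing $\socialcost(a) = \bigoplus_{i \in V} \dist(i, a)$ and using that $a \succeq_i \topp(M(i))$ means $\dist(i, a) \leq \dist(i, \topp(M(i)))$, the monotonicity axiom (together with commutativity and associativity, which lift termwise monotonicity to the whole aggregate) yields $\socialcost(a) \leq \bigoplus_{i \in V} \dist(i, \topp(M(i)))$.

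Next I would apply the metric inequality w.r.t. $\oplus$ to each term, $\dist(i, \topp(M(i))) \leq \dist(i, b) \oplus \dist(b, \topp(M(i)))$, and use commutativity and associativity to split the aggregate, obtaining $\socialcost(a) \leq \socialcost(b) \oplus \bigl( \bigoplus_{i \in V} \dist(b, \topp(M(i))) \bigr)$. Since $M$ is a bijection, reindexing gives $\bigoplus_{i \in V} \dist(b, \topp(M(i))) = \bigoplus_{j \in V} \dist(b, \topp(j))$. The crux is then to bound this second term: applying the metric inequality once more gives $\dist(b, \topp(j)) \leq \dist(b, j) \oplus \dist(j, \topp(j))$, and since $\topp(j) \succeq_j b$ forces $\dist(j, \topp(j)) \leq \dist(j, b)$, monotonicity (and symmetry) upgrades this to $\dist(b, \topp(j)) \leq \dist(j, b) \oplus \dist(j, b)$. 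Aggregating and splitting as before yields $\bigoplus_{j \in V} \dist(b, \topp(j)) \leq \socialcost(b) \oplus \socialcost(b)$, whence $\socialcost(a) \leq \socialcost(b) \oplus \socialcost(b) \oplus \socialcost(b)$ for every $b \in C$. The idempotent corollary is then immediate: if $x \oplus x = x$ for all $x$, then $\socialcost(b) \oplus \socialcost(b) \oplus \socialcost(b) = \socialcost(b)$, so $\socialcost(a) \leq \socialcost(b)$ for all $b$, i.e. $\pluralitymatching$ is optimal.

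The main obstacle I anticipate is algebraic rather than conceptual: one must verify that the rearrangement of $\bigoplus_{i \in V}(u_i \oplus v_i)$ into $\bigl( \bigoplus_{i \in V} u_i \bigr) \oplus \bigl( \bigoplus_{i \in V} v_i \bigr)$ is legitimate---this is exactly where commutativity and associativity are indispensable---and that termwise inequalities $u_i \leq u_i'$ may be aggregated into $\bigoplus_i u_i \leq \bigoplus_i u_i'$, which requires a short induction on $|V|$ using monotonicity in each coordinate (monotonicity in the second argument following from the stated axiom via commutativity). Care is also needed because $\oplus$ need not be subadditive nor possess an identity, so no step may silently borrow the familiar arithmetic of ordinary sums; every manipulation must be justified solely from \Cref{assumption:binary_operator} and the metric inequality w.r.t. $\oplus$.
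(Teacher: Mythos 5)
Your proposal is correct and follows essentially the same route as the paper's own proof: perfect matching from \Cref{proposition:perfect_matching}, two applications of the metric inequality w.r.t.\ $\oplus$, reindexing via the matching, and the bound $\dist(i,\topp(i)) \leq \dist(i,b)$ from $\topp(i) \succeq_i b$. Your extra care in justifying the rearrangement $\bigoplus_{i}(u_i \oplus v_i) = \bigl(\bigoplus_i u_i\bigr) \oplus \bigl(\bigoplus_i v_i\bigr)$ and the aggregation of termwise inequalities via monotonicity is a point the paper leaves implicit, but it is not a difference in approach.
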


The proof of this theorem closely follows the argument in \cite{DBLP:conf/focs/Gkatzelis0020}, but we include it for completeness.

\begin{proof}[Proof of \Cref{theorem:upper_bound-axiomatic}]
Let $a \in C$ be a candidate whose integral domination graph $G^{\mathcal{E}}(a)$ admits a perfect matching $M : V \mapsto V$, such that $a \succeq_i \topp(M(i))$ for all $i \in V$. Then, it follows that 

\begin{align*}
    \socialcost(a) &= \sideset{}{}\bigoplus_{i \in V} d(i, a) \\
           &\leq \sideset{}{}\bigoplus_{i \in V} \dist(i, \topp(M(i))) & (a \succeq_i \topp(M(i)), \forall i \in V) \\
           &\leq \sideset{}{}\bigoplus_{i \in V} ( \dist(i, b) \oplus \dist(b, \topp(M(i))) ) & (\text{triangle inequality w.r.t. $\oplus$}) \\
           &= \socialcost(b) \oplus \sideset{}{}\bigoplus_{i \in V} \dist(b, \topp(i))) & (\text{$M$ is a perfect matching}) \\
           &\leq \socialcost \oplus \sideset{}{}\bigoplus_{i \in V} ( \dist(i, b) \oplus \dist(i, \topp(i)) ) & (\text{triangle inequality w.r.t. $\oplus$}) \\
           &\leq \socialcost(b) \oplus \sideset{}{}\bigoplus_{i \in V} ( \dist(i, b) \oplus \dist(i, b) ) & (\text{$\topp(i) \succeq_i b$}) \\
           &= \socialcost(b) \oplus \socialcost \oplus \socialcost(b),
\end{align*}
where we used the properties of the operator (\Cref{assumption:binary_operator}); given that the choice of $b$ was arbitrary we arrive at the desired conclusion.
\end{proof}

One implication of this observation is the following: If the social cost is determined by the agent with the largest distance, i.e. $\socialcost(a) = \max_{i \in V} d(i, a)$, then $\pluralitymatching$ has distortion $1$ in ultra-metric spaces.

\end{document}